\newtheorem{proposition}{Proposition}
\declaretheorem{theorem}
\newtheorem{lemma}{Lemma}
\newtheorem*{lemma*}{Lemma}
\newtheorem{corollary}{Corollary}
\theoremstyle{definition}
\newtheorem{example}{Example}
\newtheorem{definition}{Definition}
\newcommand{\real}{\mathbb R} %real
\newcommand{\half}{\tfrac{1}{2}} %half
\newcommand{\lh}{\mathcal{L(H)}} %bounded linear operators
\newcommand{\lhp}{\lh_+} %positive operators
\newcommand{\ket}[1]{|#1\rangle} %ket
\newcommand{\bra}[1]{\langle#1|} %bra
\newcommand{\no}[1]{\left\|#1\right\|} %norm
\newcommand{\tr}[1]{\mathrm{tr}\left[#1\right]} %trace
\newcommand{\ptr}[2]{\mathrm{tr}_{#1}[#2]} %partial trace
\newcommand{\id}{\mathbbm{1}} %identity operator
\newcommand{\A}{\mathsf{A}}%generic observable
\newcommand{\B}{\mathsf{B}}%generic observable
\newcommand{\C}{\mathsf{C}}%generic observable
\newcommand{\G}{\mathsf{G}}%generic joint observable
\newcommand{\M}{\mathsf{M}}%joint obs
\newcommand{\X}{\mathsf{X}}%generic observable
\newcommand{\Y}{\mathsf{Y}}%generic observable
\newcommand{\Z}{\mathsf{Z}}
\newcommand{\coin}{\mathcal{C}} %coin tossing measurements
\newcommand{\state}{\mathcal{S}} %states
\newcommand{\obs}{\mathcal{O}} %observables
\newcommand{\lin}[1]{{\rm span}\left( #1 \right)}
\newcommand{\tests}{\mathcal{T}} %test states
\newcommand{\testa}{\mathcal{A}} %test observables A
\newcommand{\testb}{\mathcal{B}} %test observables B
\newcommand{\fix}{\mathrm{Fix}}
\def\Ha{\mathcal{H}}
\def\Ka{\mathcal{K}}
\def\1{\mathbb{1}}
\DeclareMathOperator{\swap}{SWAP}
\newcommand{\ketbra}[1]{\ket{#1} \! \bra{#1}}
\begin{document}

\title{Dispensing of quantum information beyond no-broadcasting theorem - is it possible to broadcast anything genuinely quantum?}

\begin{abstract}
No-broadcasting theorem is one of the most fundamental results in quantum information theory; it guarantees that the simplest attacks on any quantum protocol, based on eavesdropping and copying of quantum information, are impossible. Due to the fundamental importance of the no-broadcasting theorem, it is essential to understand the exact boundaries of this limitation. We generalize the standard definition of broadcasting by restricting the set of states which we want to broadcast and restricting the sets of measurements which we use to test the broadcasting. We show that in some of the investigated cases broadcasting is equivalent to commutativity, while in other cases commutativity is not necessary.
\end{abstract}

\author{Teiko Heinosaari}
\email{teiko.heinosaari@utu.fi}
\affiliation{Quantum Algorithms and Software, VTT Technical Research Centre of Finland Ltd, Finland}
\affiliation{Department of Physics and Astronomy, University of Turku, Turku 20014, Finland}

\author{Anna Jen\v{c}ov\'{a}}
\email{jenca@mat.savba.sk}
\affiliation{Mathematical Institute, Slovak Academy of Sciences, Bratislava, Slovakia}

\author{Martin Pl\'{a}vala}
\email{martin.plavala@uni-siegen.de}
\affiliation{Naturwissenschaftlich-Technische Fakult\"{a}t, Universit\"{a}t Siegen, 57068 Siegen, Germany}

\maketitle

%%%%%%%%%%%%%
\section{Introduction}
%%%%%%%%%%%%%

The no-cloning theorem is perhaps the most famous limitation in quantum information processing. The impossibility to duplicate an unknown quantum state makes a fundamental difference between classical and quantum information processing. The no-cloning theorem is nowadays seen not only as an obstacle but a fact that enables e.g. secure communication protocols. Namely, since an unknown quantum state cannot be perfectly duplicated it follows that a potential eavesdropper on a quantum communication channel cannot capture messages without being detected. The no-cloning theorem is not an isolated feature of quantum theory but links to other impossibility statements such as the no-information-without-disturbance theorem and the non-existence of joint measurement for arbitrary pairs of measurements \cite{QI01Werner}. In that perspective, the violation of Bell inequalities can be seen as an experimental proof of the no-cloning theorem.

Later, an important distinction between cloning and broadcasting was made \cite{Barnumetal96} and in the present work we concentrate on the latter concept, hence we recall their difference. A (hypothetical) perfect cloning device takes an unknown pure state $\rho$ as an input and gives a composite system in a joint state $\rho\otimes\rho$ as an output. Even in a classical theory this condition cannot hold for all mixed states, therefore it makes more sense to pose it only for pure quantum states. In fact, the originally presented version of the no-cloning theorem formulates cloning for pure states and derives a contradiction from their superposition structure \cite{WoZu82}. The defining condition for a perfect broadcasting device is weaker than cloning; it is only required that the output is a joint state $\omega$ of a bipartite system that has the duplicate of the initial state $\rho$ as its both marginal states, i.e., the partial traces of $\omega$ are $\ptr{1}{\omega}=\ptr{2}{\omega}=\rho$. The broadcasting condition captures the essence of the possibility to duplicate classical information as in a classical theory; the perfect cloning device for pure states is also a perfect broadcasting device that broadcasts all classical states, pure and mixed. In quantum theory the impossibility of universal and perfect broadcasting follows from the no-cloning theorem. This is due to the fact that a joint state with pure marginal states is necessarily a product state, hence for pure states broadcasting is the same as cloning. Therefore, taken only as categorical no-go statements, the no-cloning theorem and the no-broadcasting theorem are equivalent. We emphasize that the previous statements and also all the later developments in this work assume that there is one input copy. If there are more input copies, then no-cloning and no-broadcasting become separated and broadcasting of qubit states is, in fact, possible if there are four or more input copies available \cite{d2005superbroadcasting,buscemi2006universal}.

The separation between cloning and broadcasting becomes relevant when one considers approximate, non-universal or otherwise imperfect scenarios. The underlying theme is to find and characterize possible quantum devices that function as a cloning or broadcasting device in some approximate or limited manner. The sole no-go statement does not prevent the existence of a device with arbitrarily small nonzero deviation from the hypothetical cloning or broadcasting device. Since the existence of such an almost perfect device would evidently ruin the essence and practical consequences of the no-go theorems, this topic is important and various approximate scenarios have been exhaustively studied earlier \cite{ScIbGiAc05,CeFi06}. In approximate scenarios the distinction between cloning and broadcasting is not sharp as an actual device can be compared to both of these hypothetical devices and it simulates them imperfectly. Roughly speaking, if the aim is optimal approximate broadcasting, then the quality of output states is tested on the individual subsystems and no attention is paid to their joint state. In contrast to approximate scenarios, one can consider perfectly accurate but non-universal cloning or broadcasting devices, meaning that the input state is not completely arbitrary but belongs to some known subset of states. In this case the distinction between cloning and broadcasting is clear since the performance of the device is required to be perfectly that of the hypothetical device, just not on all states. It has been shown that a set of quantum states that can be broadcasted with a single quantum channel is contained in the simplex generated by a set of distinguishable states. In other words, maximal subsets of states that can be broadcasted are replicas of classical state spaces in the given quantum state space.

In the current work we generalize the setting of accurate but limited broadcasting. We formulate the concept of a broadcasting test, where the capability of a quantum channel to broadcast is tested with a setup consisting of some specified test states and test measurements. This formulation covers several interesting scenarios as special cases, some of which have been investigated earlier and some new arising naturally from the introduced framework. We prove that the defining condition of a broadcasting test reduces to a mathematical condition on factorization maps, determined by the corresponding test measurements. The derived condition on factorization maps avoids irrelevant details and provides tools to study any given broadcasting test. Further, the formulation of a broadcasting test in terms of factorization maps leads to a new viewpoint, where broadcasting can be seen as a certain kind of congruency relation on quantum channels. We prove that this relation is strictly stronger than the compatibility relation on quantum channels. In the final section we will study broadcasting in two specific cases: when the sets of measurements used to test the broadcasting are restricted and when the set of states and set of measurements on one side are restricted. We show that in all cases commutativity is sufficient for broadcasting, but, surprisingly, we also show that commutativity is strictly not necessary by constructing concrete examples of scenarios which are broadcastable, but not commutative. This complements the earlier result on broadcastable set of states where commutativity was often shown to be necessary. The general concept of a broadcasting test allows to study intermediate cases, where limitations are both of states and measurements.

%%%%%%%%%%%%%
\section{Broadcasting tests}\label{sec:tests}
%%%%%%%%%%%%%

\begin{figure}
\centering
\subfigure[]
{
\includegraphics[height=2.8cm]{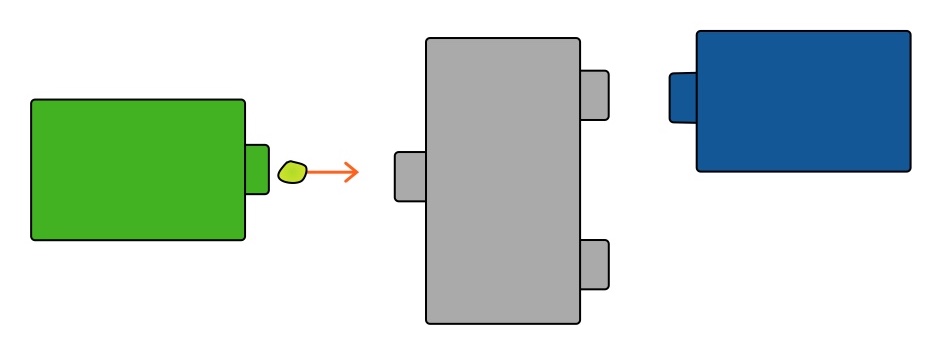}
}
\hspace{0.1cm}
\subfigure[]
{
\includegraphics[height=2.8cm]{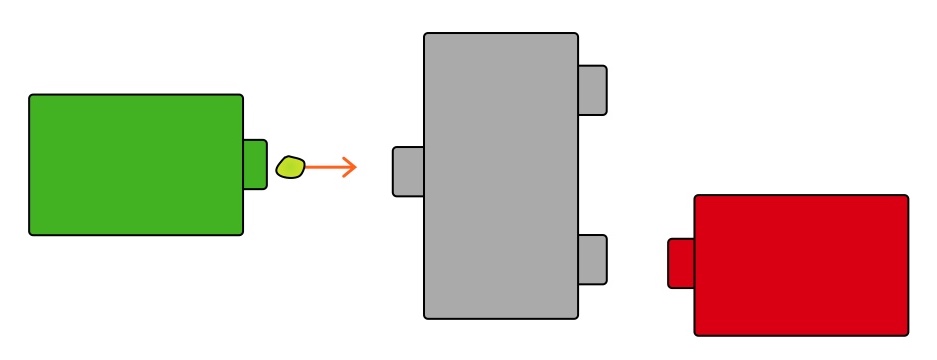}
}
\hspace{2cm}
\subfigure[]
{
\includegraphics[height=3.5cm]{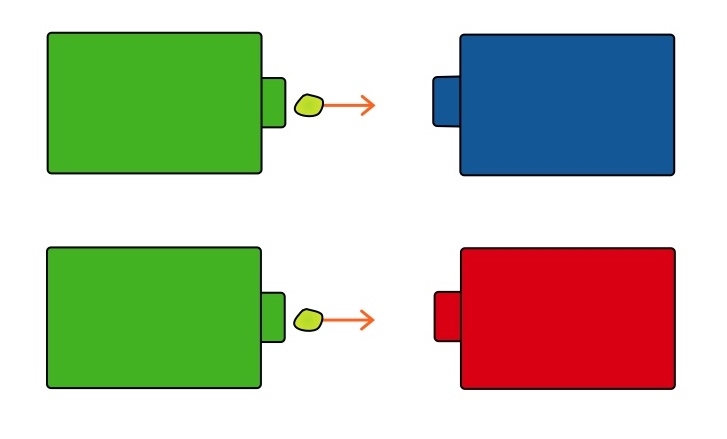}
}
\caption{\label{fig:test}(a),(b) In a broadcasting test some chosen test states are produced with a preparator (green device), inserted to a broadcasting channel (gray device) and measurements (blue and red devices) are then performed on the two output systems. (c) The measurement outcome distributions are compared to the situation where the same measurements are performed independently. If these two situations cannot be differentiated, then the channel passes the broadcasting test.}
\end{figure}

By a quantum channel we mean a completely positive and trace preserving linear map from a quantum state space to another quantum state space. These are the physically realizable input-output processes. Let us consider a scenario where a quantum channel $\Lambda:\state \to \state \otimes \state$ is planned to be used to broadcast an unknown state to two parties. Here $\state$ is a fixed state space of $d$-dimensional quantum system. Generally, a channel from $\state$ to $\state \otimes \state$ (or more generally to $\state^{\otimes n}$) is called a broadcasting channel, irrespective of its action. Here we use $\state \otimes \state$ to denote the tensor product of quantum state spaces, $\state \otimes \state$ is the state space consisting of density matrices on the tensor product of the underlying Hilbert spaces. The broadcasting condition is
\begin{equation}\label{eq:broad}
\ptr{1}{\Lambda(\rho)}=\ptr{2}{\Lambda(\rho)}=\rho
\end{equation}
and due to the no-broadcasting theorem $\Lambda$ cannot satisfy this for all states $\rho$. However, it may work like that in some limited setting. To test the functioning of $\Lambda$, we insert some test states as inputs and perform some measurements on the output states. The test states and measurements may be completely up to our choice, or they may be determined by some constrains e.g. of a communication scenario. In any case, we assume that the test states and measurements, although arbitrary, are fully known. A measurement is mathematically described as a positive operator valued measure (POVM), i.e., for each possible measurement outcome $i$ is assigned a positive operator $\A_i$ and $\sum_i \A_i = \id$. The measurement outcome distribution in a state $\rho$ is given as $\A_i(\rho):=\tr{\rho \A_i}$ for all $i$ and we will use $\obs$ to denote the set of all measurements on $\state$.

If a state $\rho$ is inserted to a broadcasting channel $\Lambda$ and $\A$ and $\B$ are measurements on the output sides, then $\Lambda$ functions in the intended way if the measurement outcome probabilities are the same for the input state $\rho$ and for the transformed states $\ptr{2}{\Lambda(\rho)}$ and $\ptr{1}{\Lambda(\rho)}$, i.e., $\A(\ptr{2}{\Lambda(\rho)}) =\A(\rho)$ and $\B(\ptr{1}{\Lambda(\rho)}) =\B(\rho)$. This motivates the following definitions, illustrated in Fig.~\ref{fig:test}.

\begin{definition}\label{def:broad}
A \emph{broadcasting test} is a triple $(\tests,\testa,\testb)$, where $\tests\subseteq\state$ is a collection of tests states and $\testa,\testb\subseteq\obs$ are collections of test measurements.
We say that a channel $\Lambda:\state \to \state \otimes \state$ \emph{passes the broadcasting test} $(\tests,\testa,\testb)$ if
\begin{align}
\A(\ptr{2}{\Lambda(\rho)}) =\A(\rho) \label{eq:broad-A} \\
\B(\ptr{1}{\Lambda(\rho)}) =\B(\rho) \label{eq:broad-B}
\end{align}
for all $\rho\in\tests$, $\A\in\testa$ and $\B\in\testb$.
If there exists a channel $\Lambda$ that passes a broadcasting test $(\tests,\testa,\testb)$, then we say that the triple $(\tests,\testa,\testb)$ is \emph{broadcastable}.
\end{definition}

Clearly, universal perfect broadcasting corresponds to the choices $\tests=\state$ and $\testa=\testb=\obs$ and by the no-broadcasting theorem there is no channel that would pass the broadcasting test $(\state,\obs,\obs)$. The question is then to find the conditions when a triple $(\tests,\testa,\testb)$ is broadcastable. It is instructive to separate the following special cases:

\begin{enumerate}[(a)]

\item \emph{Broadcasting of a subset of states}. In this case test measurements are arbitrary but test states are from some limited subset of states, i.e., $\tests\subset\state$, $\testa=\testb=\obs$. The conditions \eqref{eq:broad-A} and \eqref{eq:broad-B} can then be written as \eqref{eq:broad}. It is customary to say that a subset of states $\tests$ is broadcastable when $(\tests,\obs,\obs)$ is broadcastable in the sense of Def.~\ref{def:broad}. It has been proven in \cite{BaBaLeWi07} that this is the case if and only if $\tests$ lies in a simplex generated by jointly distinguishable states. For instance, orthogonal pure states are distinguishable and hence a subset consisting of their convex mixtures is broadcastable.

\item \emph{Broadcasting of a subset of measurements}. This is the analog of the previous case, but now test states are arbitrary while test measurements are limited, i.e., $\tests=\state$, $\testa=\testb\subset\obs$. We also say that a subset $\testa$ is broadcastable when $(\state,\testa,\testa)$ is broadcastable in the sense of Def.~\ref{def:broad}. In the case of $\testa$ consisting of two measurements this reduces to the broadcastability relation studied in \cite{Heinosaari16}, where it was introduced as a strong form of compatibility and broadcastable pairs of qubit measurements were characterized. We present a full characterization of broadcastable sets of measurements in Sec.~\ref{sec:special}, we show that in this case commutativity is sufficient but not necessary for broadcasting.

\item \emph{One-side broadcasting of subsets of measurements}. When thinking about the previous scenario, there is no need for the subsets $\testa$ and $\testb$ to be identical as we can perform different kind of measurements on the two outputs. In the case of pairs of measurements this was called one-side broadcasting in \cite{Heinosaari16} and it was shown that two qubit measurements are one-side broadcastable if and only if they are mutually commuting. We prove in Sec.~\ref{sec:special} that mutual commutativity is sufficient but not necessary in general.

\item \emph{Perfect transmission of a subset of states}. If $\tests\subset\state$ and $\testb=\obs$, then the test states are perfectly transmitted to the $\testb$-side. This scenario is more general than broadcasting of subsets of states since we may have $\testa \subset \obs$, meaning that on the $\testa$-side we make only a partial test. The scenario can correspond e.g. to eavesdropping, where an eavesdropper does not want to leave any traces from the intervenience. We present a characterization of broadcastable test states $\tests$ and test measurement $\testa$ in Sec.~\ref{sec:special}, we show that in this case commutativity is necessary for broadcasting.

\end{enumerate}

%%%%%%%%%%%%%%%%%%%%
\section{Reformulation of broadcasting tests}\label{sec:reformulation}
%%%%%%%%%%%%%%%%%%%%

We begin this section with a simple observation. The equations \eqref{eq:broad-A} and \eqref{eq:broad-B} are linear in $\rho$, $\A$ and $\B$. Therefore, if a channel $\Lambda$ passes a broadcasting test $(\tests,\testa,\testb)$, then it also passes the broadcasting test for all states in $\lin{\tests}$ and all measurements in $\lin{\testa}$ and $\lin{\testb}$. Here $\lin{X}$ denotes the linear span of a set $X$.

In the following we present a mathematically equivalent form of broadcasting tests. To do this, we introduce an equivalence relation that partitions the state space according to a particular test. Firstly, we say that a measurement $\A$ \emph{distinguishes} two states $\rho$ and $\sigma$ if $\A(\rho) \neq \A(\sigma)$, meaning that the measurement outcome distributions for these states are different. We further say that a subset $\testa\subseteq \obs$ distinguishes two states $\rho$ and $\sigma$ if $\A(\rho) \neq \A(\sigma)$ for some $\A\in\testa$. For $\rho, \sigma \in \state$ and $\testa\subseteq\obs$, we denote $\rho \approx_\testa \sigma$ and say that $\rho$ and $\sigma$ are \emph{$\testa$-equivalent} if $\A(\rho) = \A(\sigma)$ for all $\A \in \testa$. Hence, two states are deemed $\testa$-equivalent if no measurement in $\testa$ can distinguish between them.

A collection $\testa$ is called \emph{informationally complete} if it distinguishes all pairs of states. If $\testa$ is informationally complete, then clearly $(\tests,\testa,\testb)$ is broadcastable if and only if $(\tests,\obs,\testb)$ is broadcastable. In other words, if we would use informationally complete subsets of measurements in both output sides, then we have the state broadcasting scenario (case (b) in Sec.~\ref{sec:tests}).

For any $\testa\subseteq\obs$, we denote the equivalence classes in the equivalence relation $\approx_\testa$ as
\begin{equation}
[\rho]_\testa = \{ \sigma \in \state : \rho \approx_\testa \sigma \} \, .
\end{equation}
We further denote $\tests_{\testa} = \{ [\rho]_\testa : \rho \in \tests \}$ and $\state_{\testa} = \{ [\rho]_\testa : \rho \in \state \}$ the sets of all equivalence classes in $\tests$ and $\state$, respectively.
It follows that we can introduce the factorization map $F_\testa : \state \to \state_{\testa}$ as
\begin{equation}
F_\testa(\rho) = [\rho]_\testa \, .
\end{equation}
Further, for any $\A \in \testa$ we define a map $\A'$ on $\state_{\testa}$ as
\begin{equation}
\A' ( [\rho]_\testa ) = \A(\rho) \, .
\end{equation}
This is well-defined since by construction every $\A \in \testa$ is constant on the equivalence classes $[\rho]_\testa$.
We therefore have
\begin{equation}
\A (\rho) = (\A' \circ F_\testa)(\rho) \label{eq:reform-factor-obseravable}
\end{equation}
for all $\rho \in \state$.

\begin{figure}
\centering
\subfigure[]
{
\includegraphics[height=4.5cm]{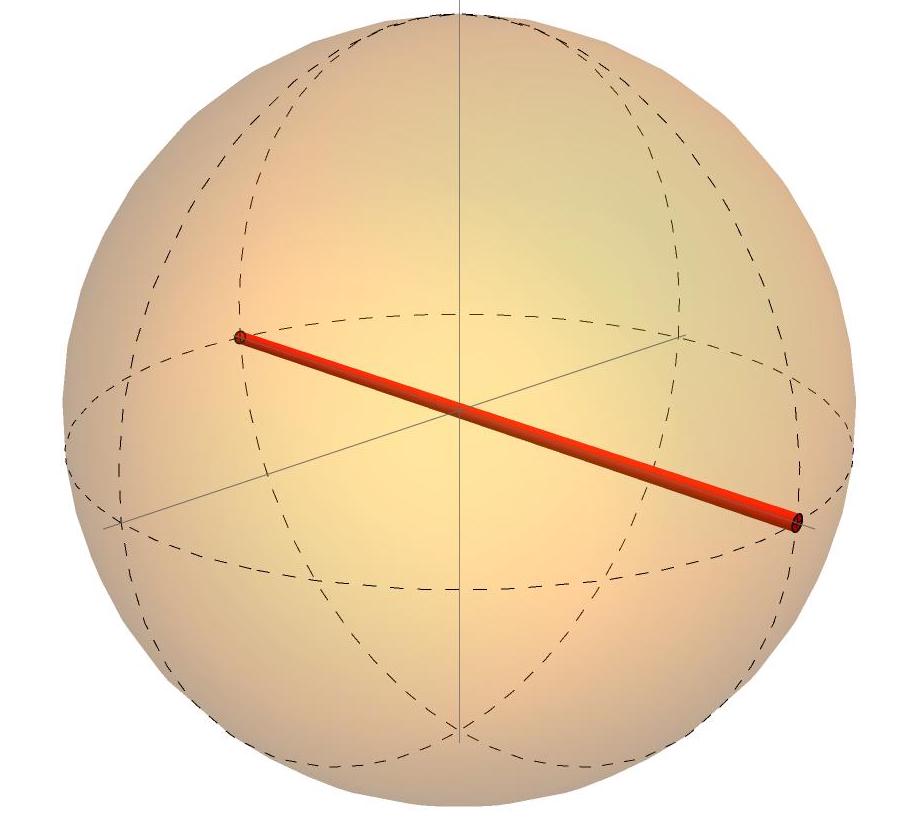}
}
\hspace{1.5cm}
\subfigure[]
{
\includegraphics[height=4.5cm]{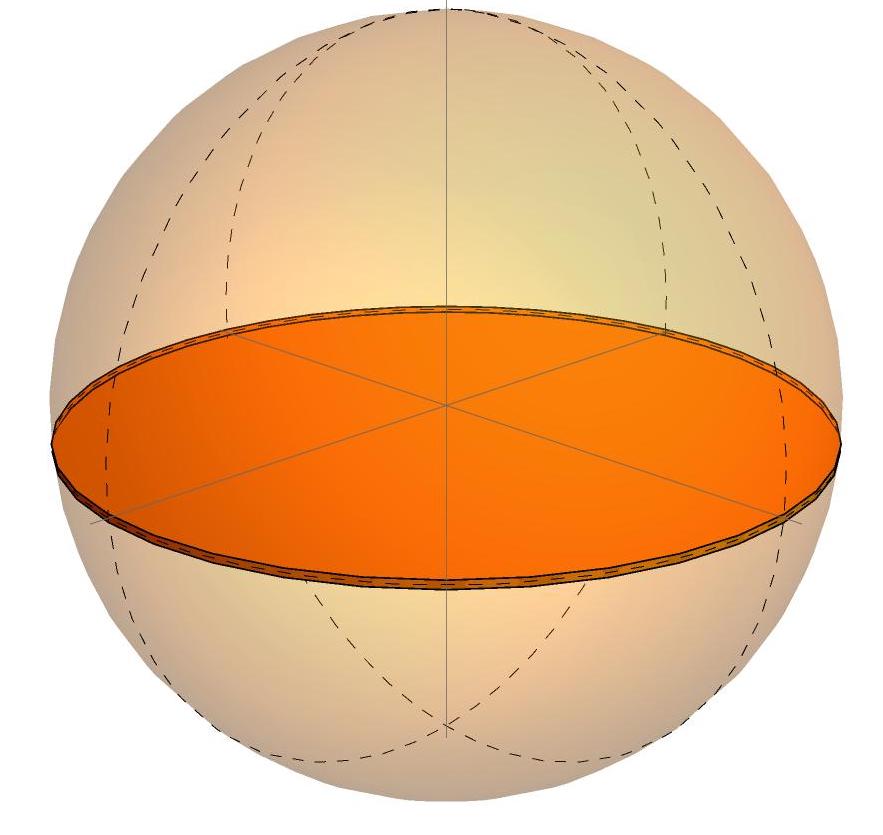}

}
\caption{ \label{fig:bloch} The qubit state space can be seen as a three dimensional ball and then factorization maps are shrinking the ball in some specific way, depending on the subset of measurement. (a) The set $\state_\X$ can be identified with the interval $[-1,1]$ while (b) the set $\state_{\{\X,\Y\}}$ can be identified with the disk $\{ (x,y) : x^2+y^2 \leq 1\}$. }
\end{figure}

\begin{example}\label{ex:factor-qubit}(\emph{Factorization maps on a qubit}.)
Let us consider a qubit system.
The qubit states can be parametrized with Bloch vectors $\vec{r}\in\real^3$, $\no{\vec{r}}\leq 1$, via the correspondence $\rho=\half (\id + r_x\sigma_x + r_y\sigma_y + r_z\sigma_z )$, where $\sigma_x,\sigma_y,\sigma_z$ are the Pauli matrices. We denote by $\X$, $\Y$, $\Z$ the measurements that measure the components of the Bloch vector $\vec{r}$. For instance, $\X(\rho)=(\half(1 + r_x),\half (1-r_x))$. We can identify the set $\state_\X$ of all equivalence classes with the interval $[-1,1]$ and the factorization map $F_\X$ is $F_\X(\rho)=\tr{\rho \sigma_x}$ (see Fig.~\ref{fig:bloch}(a)). In this representation the map $\X'$ is given by $\X'(p)=(\half (1+p),\half(1-p))$. The description is analogous for $\Y$ and $\Z$, but becomes different if $\testa$ consists of two measurements, say $\X$ and $\Y$. We can identify the set $\state_{\{\X,\Y\}}$ with the disk $\{ (x,y) : x^2+y^2 \leq 1\}$ and the factorization map $F_{\{\X,\Y\}}$ is then $F_{\{\X,\Y\}}(\rho) = (\tr{\rho \sigma_x},\tr{\rho \sigma_y})$ (see Fig.~\ref{fig:bloch}(b)). The maps $\X'$ and $\Y'$ are now given by $\X'(p,q)=(\half (1+p),\half(1-p))$ and $\Y'(p,q)=(\half (1+q),\half(1-q))$. If $\testa$ consists of all three measurements $\X$, $\Y$ and $\Z$, then the set is informationally complete. The factorization map $F_{\{\X,\Y,\Z\}}$ is then simply the bijection between the qubit state space and the unit ball.
\end{example}

With the previous concepts we now express the broadcasting test condition in terms of factorization maps as follows.

\begin{proposition} \label{prop:reform-broadcast}
A channel $\Lambda: \state \to \state \otimes \state$ passes a broadcasting test $(\tests, \testa, \testb)$ if and only if
\begin{align}
F_\testa(\rho) & = F_\testa (\ptr{2}{\Lambda(\rho)}) \, , \label{eq:reform-broadcast-1} \\
F_\testb(\rho) & = F_\testb (\ptr{1}{\Lambda(\rho)}) \label{eq:reform-broadcast-2}
\end{align}
for all $\rho \in \tests$.
\end{proposition}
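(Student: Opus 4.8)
The plan is to reduce the statement to an almost tautological equivalence by unwinding the two characterizing equations in Definition~\ref{def:broad} through the factorization identity \eqref{eq:reform-factor-obseravable}. The whole content is that the family of scalar conditions $\A(\ptr{2}{\Lambda(\rho)})=\A(\rho)$, indexed by $\A\in\testa$, is logically equivalent to the single equation $F_\testa(\ptr{2}{\Lambda(\rho)})=F_\testa(\rho)$ on the quotient $\state_\testa$; and symmetrically for the $\B$-side with $F_\testb$. I would prove the $\testa$-side equivalence in full and then remark that the $\testb$-side is identical with the roles of the marginals exchanged.

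First I would fix $\rho\in\tests$ and abbreviate $\sigma:=\ptr{2}{\Lambda(\rho)}$. The forward direction is the easy one: assuming the channel passes the test, we have $\A(\rho)=\A(\sigma)$ for every $\A\in\testa$, which is precisely the statement that $\rho\approx_\testa\sigma$, i.e.\ $\sigma\in[\rho]_\testa$; hence $F_\testa(\sigma)=[\sigma]_\testa=[\rho]_\testa=F_\testa(\rho)$, giving \eqref{eq:reform-broadcast-1}. For the converse I would start from $F_\testa(\rho)=F_\testa(\sigma)$, i.e.\ $[\rho]_\testa=[\sigma]_\testa$, and apply each well-defined map $\A'$ to this equality of equivalence classes. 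Using \eqref{eq:reform-factor-obseravable} in the form $\A(\rho)=\A'(F_\testa(\rho))$ and $\A(\sigma)=\A'(F_\testa(\sigma))$, the equality $F_\testa(\rho)=F_\testa(\sigma)$ immediately yields $\A(\rho)=\A(\sigma)$ for all $\A\in\testa$, which is \eqref{eq:broad-A}. The same argument with $\testb$ and $\ptr{1}{\Lambda(\rho)}$ delivers the equivalence of \eqref{eq:broad-B} and \eqref{eq:reform-broadcast-2}, and since both biconditionals must hold simultaneously for all $\rho\in\tests$, the proposition follows.

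I do not anticipate a genuine obstacle here, since the proposition is essentially a repackaging of the definitions already laid out. The only point requiring care is the logical quantifier bookkeeping: \eqref{eq:reform-broadcast-1} is a single equation per state $\rho$, whereas \eqref{eq:broad-A} is a family of equations indexed by $\A$, so I must be explicit that the definition of $\approx_\testa$ (and hence of $F_\testa$) bundles exactly that family together. The clean way to phrase this is to observe that $F_\testa(\rho)=F_\testa(\sigma)$ holds if and only if $\A(\rho)=\A(\sigma)$ for all $\A\in\testa$ \emph{by the very definition of the equivalence relation} $\approx_\testa$; once this is stated, both directions are immediate and no computation is needed.
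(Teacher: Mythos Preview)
Your proposal is correct and follows essentially the same route as the paper's own proof: both directions rely on the definition of $\approx_\testa$ for the forward implication and on applying $\A'$ together with the factorization identity \eqref{eq:reform-factor-obseravable} for the converse, with the $\testb$-side handled symmetrically. The only cosmetic difference is your introduction of the abbreviation $\sigma:=\ptr{2}{\Lambda(\rho)}$, which the paper does not use.
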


\begin{proof}
Assume that a channel $\Lambda$ passes the broadcasting test $(\tests, \testa, \testb)$. For any $\A \in \testa$ and $\rho \in \tests$ we must have $\A(\rho) = \A(\ptr{2}{\Lambda(\rho)})$ from which it follows that $\rho \approx_\testa \ptr{2}{\Lambda(\rho)}$. Therefore, we get $F_\testa (\rho) = F_\testa (\ptr{2}{\Lambda(\rho)})$ and so \eqref{eq:reform-broadcast-1} holds. In a similar fashion we get $\rho \approx_\testb \ptr{1}{\Lambda(\rho)}$ and so also \eqref{eq:reform-broadcast-2} holds.

Let us then assume that \eqref{eq:reform-broadcast-1} and \eqref{eq:reform-broadcast-2} hold. For $\A \in \testa$ and $\rho \in \tests$ we have
\begin{equation}
\A(\rho) = (\A' \circ F_\testa )(\rho) = (\A' \circ F_\testa)(\ptr{2}{\Lambda(\rho)}) = \A(\ptr{2}{\Lambda(\rho)})
\end{equation}
where we have used \eqref{eq:reform-factor-obseravable} twice. For any $\B \in \testb$ we have
\begin{equation}
\B(\rho) = (\B' \circ F_\testb )(\rho) = (\B' \circ F_\testb (\ptr{1}{\Lambda(\rho)} = \B (\ptr{1}{\Lambda(\rho)}
\end{equation}
and thereby $\Lambda$ passes the broadcasting test $(\tests, \testa, \testb)$.
\end{proof}

The previously expressed statement that a broadcasting test depends only on the corresponding factorization maps and not on the specific details of measurements is a simple but useful fact. In the following we demonstrate the consequences of Prop.~\ref{prop:reform-broadcast}.

\begin{figure}
\centering
\includegraphics[height=4cm]{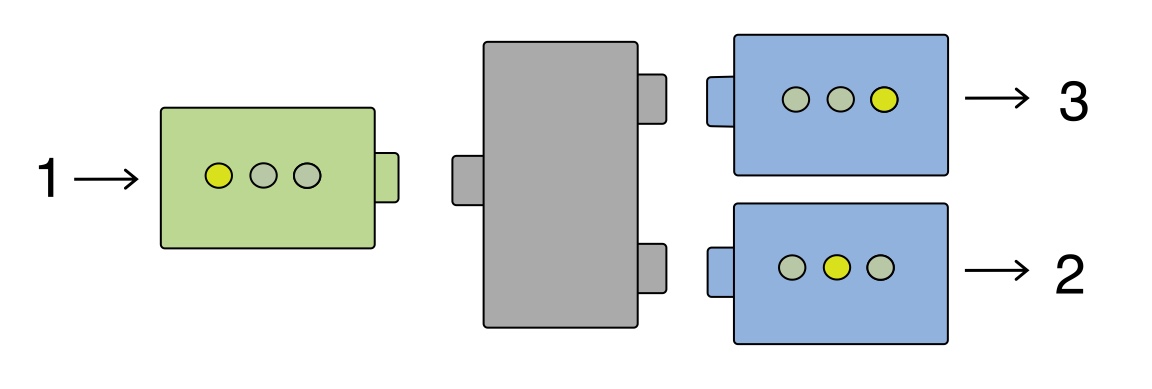}
\caption{ \label{fig:anti} A uniform antidiscrimination setup requires that the obtained measurement outcome is not the sent input put some other index, obtained with uniform probability. Even if such a setup would exist, the setup may not be broadcastable. This is the case e.g. for three inputs and a qubit system as an information carrier.}
\end{figure}

\begin{example}(\emph{Uniform antidiscrimination of qubit states}.)
Antidiscrimination of states $\rho_1,\ldots,\rho_n$ means that there is a $n$-outcome measurement that gives always a different outcome than what is encoded in a state, i.e., $\tr{\rho_x \A_x}=0$ for all $x=1,\ldots,n$. A stronger form is uniform antidiscrimination, which further requires that the other outcomes are obtained with uniform probability, i.e.,
\begin{equation}\label{eq:ua}
\tr{\rho_x \A_y}=\frac{1-\delta_{xy}}{n-1}
\end{equation}
for all $x,y=1,\ldots,n$. In the case of a qubit system, uniform antidiscrimination is possible for $n=2,3,4$ but not for higher $n$ \cite{HeKe19}. We can then ask if there is a channel that broadcasts these antidiscrimination setups, meaning that we look at a broadcasting test with $\tests=\{\rho_1,\ldots,\rho_n\}$, $\testa = \testb = \{\A\}$ and \eqref{eq:ua} is satisfied (see Fig.~\ref{fig:anti}). We consider the cases $n=2,3,4$ separately.

For $n=2$ uniform antidiscrimination is implemented with two perfectly distinguishable states and $\A$ is the measurement that discriminates them, just we relabel to outcomes to get antidiscrimination. As a set of perfectly distinguishable states is broadcastable, the task is possible.

For $n=4$ uniform antidiscrimination is implemented with four states that have Bloch vectors pointing to the corners of a regular tetrahedron. The measurement $\A$ has rank-1 effects that point to the opposite directions. We conclude that $\A$ is informationally complete, therefore there is no channel that passes the broadcasting test as the states do not belong to a simplex generated by jointly distinguishable states.

The case $n=3$ is the most interesting as it is not as evident as the previous two cases. Uniform antidiscrimination is implemented with three states that have Bloch vectors pointing to the corners of a equilateral triangle. The measurement $\A$ has rank-1 effects that point to the opposite directions. Without loosing generality we can assume that the states and effects are in the plane spanned by $\sigma_x$ and $\sigma_y$. As in Example~\ref{ex:factor-qubit} we identify $\state_\A$ with the disk $\{ (x,y) : x^2+y^2 \leq 1\}$ and the factorization map $F_{\A}$ is $F_{\A}(\rho) = (\tr{\rho \sigma_x},\tr{\rho \sigma_y})$. We conclude that a channel that passes the broadcasting test would implement the hypothetical perfect broadcasting device of the disk state space, which is a contradiction as only classical state spaces admit broadcasting \cite{BaBaLeWi07}. Therefore, the test is not broadcastable.
\end{example}

By using factorization maps we get an easy proof for the fact that the smallest amount of noise that one needs to add in order to make the scenario broadcastable is either 0 or 1. Thus adding white noise to non-broadcastable measurements does not make them broadcastable. This was already observed in slightly different forms in \cite{Heinosaari16,HeLePl19,mitra2021layers}.

\begin{proposition}\label{prop:noise}
Let $\testa$ and $\testb$ be subsets of measurements and define $\testa_{s}$ and $\testb_{t}$ to be the noisy versions of $\testa$ and $\testb$:
\begin{align}
\testa_{s} & = \{ (1-s) \A + s \C : \A \in \testa, \C \in \coin \}, \\
\testb_{t} & = \{ (1-t) \B + t \C : \B \in \testa, \C \in \coin \},
\end{align}
where $s, t \in (0,1]$ are fixed noise parameters and $\coin$ denotes the set of coin-toss measurements, i.e., POVMs such that every element is proportional to the identity operator $\id$. A channel $\Lambda: \state \to \state \otimes \state$ passes a broadcasting test $(\tests, \testa, \testb)$ if and only if it passes a broadcasting test $(\tests, \testa_{s}, \testb_{t})$.
\end{proposition}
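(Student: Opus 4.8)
The plan is to reduce the statement to Proposition~\ref{prop:reform-broadcast}, which characterizes passing a broadcasting test entirely through the factorization maps $F_\testa$ and $F_\testb$. These maps are determined by the equivalence relations $\approx_\testa$ and $\approx_\testb$, so it suffices to prove that the added noise leaves these relations unchanged, i.e. that $\approx_{\testa_s}$ coincides with $\approx_\testa$ and $\approx_{\testb_t}$ coincides with $\approx_\testb$. Granting this, $F_{\testa_s}$ and $F_\testa$ induce the same partition of $\state$, and likewise on the other side, so the conditions \eqref{eq:reform-broadcast-1}--\eqref{eq:reform-broadcast-2} for $(\testa,\testb)$ hold exactly when they hold for $(\testa_s,\testb_t)$; the claimed equivalence then follows at once.

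First I would record that a coin-toss measurement is state-independent: if $\C\in\coin$ has effects $\C_i=c_i\id$ with $\sum_i c_i=1$, then $\C_i(\rho)=\tr{\rho\,c_i\id}=c_i$ for every $\rho\in\state$, so $\C(\rho)=\C(\sigma)$ for all states. Applying the noisy measurement $(1-s)\A+s\C$ to two states and subtracting, the coin-toss contribution cancels and leaves
\begin{equation}
\bigl((1-s)\A+s\C\bigr)(\rho)-\bigl((1-s)\A+s\C\bigr)(\sigma)=(1-s)\bigl(\A(\rho)-\A(\sigma)\bigr).
\end{equation}
Thus white noise merely rescales the informative part of the outcome statistics by the factor $1-s$.

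From this identity the equality of the two equivalence relations is immediate in each direction. If $\rho\approx_\testa\sigma$, then $\A(\rho)=\A(\sigma)$ for all $\A\in\testa$, so the right-hand side above vanishes for every element of $\testa_s$ and hence $\rho\approx_{\testa_s}\sigma$. Conversely, assuming $\rho\approx_{\testa_s}\sigma$, I would fix an arbitrary $\A\in\testa$ together with some $\C\in\coin$ sharing its outcome set; since $(1-s)\A+s\C\in\testa_s$, its values on $\rho$ and $\sigma$ agree, and the displayed identity together with $1-s\neq0$ forces $\A(\rho)=\A(\sigma)$. As $\A$ was arbitrary this gives $\rho\approx_\testa\sigma$. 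Repeating the same argument with $t$ in place of $s$ and $\testb$ in place of $\testa$ yields $\approx_{\testb_t}\,=\,\approx_\testb$ and completes the reduction.

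The only place where care is needed is the nonvanishing of the factor $1-s$, which the converse implication invokes in order to invert the rescaling; this is exactly where the assumption $s<1$ enters, and analogously $t<1$ on the other side. (At the degenerate value $s=1$ the family $\testa_s$ collapses to $\coin$ and distinguishes no states, so the relation $\approx_{\testa_s}$ becomes trivial.) I therefore expect the only real subtlety to be the quantifier structure of the converse step---selecting, for each individual $\A\in\testa$, one noisy element of $\testa_s$ that isolates it---rather than any genuine computation, the remainder being routine bookkeeping through Proposition~\ref{prop:reform-broadcast}.
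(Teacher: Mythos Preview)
Your proof is correct and follows the same route as the paper: both argue that $\testa$ and $\testa_s$ induce the same equivalence relation on states and then invoke Proposition~\ref{prop:reform-broadcast}. Your more careful treatment correctly flags the degenerate endpoint $s=1$ (where the equivalence in fact fails), a point on which the paper's stated range $s\in(0,1]$ and its one-line proof are somewhat careless.
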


\begin{proof}
For any $s \neq 0$, the equivalence classes determined by $\testa$ and $\testa_{s}$ are the same. This means that the respective factorization maps are the same, hence the claim follows from Prop.~\ref{prop:reform-broadcast}.
\end{proof}

We emphasize that Prop.~\ref{prop:noise} refers to the situation where noisy measurements are measured and the measurement outcome distributions are compared to the expected distributions. In contrast, the optimal quantum cloning device \cite{Werner98,KeWe99} can be seen as performing universal broadcasting at the cost of adding noise to all measurements (see e.g. \cite{HeScToZi14}).

From now on, we will mostly write broadcasting tests in terms of the corresponding factorization maps. This will simplify our calculations as it is easier to consider two maps instead of two subsets of measurements. Moreover, we will use \eqref{eq:reform-broadcast-1} and \eqref{eq:reform-broadcast-2} to generalize the idea of broadcasting tests also to broadcasting of channels.

%%%%%%%%%%%%%%%%%%
\section{Broadcasting of channels}
%%%%%%%%%%%%%%%%%%

There is a slightly different viewpoint to broadcasting that allows a more flexible starting point. From Prop.~\ref{prop:reform-broadcast} we conclude that the broadcasting condition demands that $\Lambda$ copies the test state but is 'invisible' for factorization maps. Instead of factorization maps, we can insert any single-system channels into that condition and this leads to the following definition.

\begin{definition} \label{def:broadcast-def}
Let $\Phi_i: \state \to \state_i$, $i \in \{1, 2\}$, be channels and let $\tests \subseteq \state$ be a convex set.
We say that the triple $(\tests, \Phi_1, \Phi_2)$ is broadcastable if there exists a channel $\Lambda: \state \to \state \otimes \state$ such that for all $\rho \in \tests$ we have
\begin{align}
\Phi_1 (\rho) & = \Phi_1(\ptr{2}{\Lambda(\rho)}) \, , \label{eq:broad-chan-1} \\
\Phi_2 (\rho) & = \Phi_2(\ptr{1}{\Lambda(\rho)}) \, . \label{eq:broad-chan-2}
\end{align}
In this case we also say that $\Phi_1$ and $\Phi_2$ are $\tests$-broadcastable.
\end{definition}

The conditions \eqref{eq:broad-chan-1}--\eqref{eq:broad-chan-2} can be again seen as a sort of broadcasting test; if we are acting with $\Phi_1$ and $\Phi_2$ on the duplicates, then it seems like $\Lambda$ is a perfect broadcasting device.

Definition~\ref{def:broadcast-def} resembles the definition of compatibility of channels \cite{Haapasalo15,HeMi17}, but has a crucial difference. Namely, we recall that two channels $\Phi_i: \state \to \state_i$, $i \in \{1, 2\}$, are compatible if there exists a channel $\Lambda: \state \to \state_1 \otimes \state_2$ such that for all $\rho\in\state$, we have
\begin{align}
\Phi_1 (\rho) & = \ptr{2}{\Lambda(\rho)} \, , \label{eq:comp-chan-1} \\
\Phi_2 (\rho) & = \ptr{1}{\Lambda(\rho)} \, . \label{eq:comp-chan-2}
\end{align}
Further, $\Phi_1$ and $\Phi_2$ are called $\tests$-compatible if this condition holds for all $\rho\in\tests \subset \state$ \cite{heinosaari2021testing}.
By comparing \eqref{eq:broad-chan-1}--\eqref{eq:broad-chan-2} to \eqref{eq:comp-chan-1}--\eqref{eq:comp-chan-2} we see that the latter requires that $\Lambda$ can reproduce the actions of $\Phi_1$ and $\Phi_2$, while in the former $\Lambda$ should duplicate the input state and let $\Phi_1$ and $\Phi_2$ act on the duplicates. In the following we show that broadcastability is a stronger requirement than compatibility, although for some classes of channels they are equivalent properties.

\begin{proposition}
Let $\Phi_i: \state \to \state_i$, $i \in \{1, 2\}$, be channels and let $\tests \subseteq \state$ be a convex set.
\begin{itemize}
\item[(a)] If $\Phi_1$ and $\Phi_2$ are $\tests$-broadcastable, then $\Phi_1$ and $\Phi_2$ are $\tests$-compatible.
\item[(b)] Assume that $\Phi_1$ and $\Phi_2$ are idempotent channels (i.e. $\state_1=\state_2=\state$ and $\Phi_1^2=\Phi_1$, $\Phi_2^2=\Phi_2$). Then $\Phi_1$ and $\Phi_2$ are $\tests$-broadcastable if and only if $\Phi_1$ and $\Phi_2$ are $\tests$-compatible.
\end{itemize}
\end{proposition}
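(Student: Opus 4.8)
For part~(a), the plan is to exhibit an explicit compatibility channel built from the broadcasting channel. Suppose $\Phi_1$ and $\Phi_2$ are $\tests$-broadcastable, witnessed by a channel $\Lambda: \state \to \state \otimes \state$. The natural candidate for a compatibility channel is $\Lambda' := (\Phi_1 \otimes \Phi_2) \circ \Lambda$, which is a channel from $\state$ to $\state_1 \otimes \state_2$ since tensor products and compositions of channels are again channels. The key step is the identity
\begin{equation*}
\ptr{2}{(\Phi_1 \otimes \Phi_2)(\omega)} = \Phi_1(\ptr{2}{\omega}),
\end{equation*}
valid for every $\omega \in \state \otimes \state$, which holds because $\Phi_2$ is trace preserving; in the Heisenberg picture this reduces to $\Phi_2^*(\id) = \id$. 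Applying it with $\omega = \Lambda(\rho)$ and invoking the broadcasting condition \eqref{eq:broad-chan-1} gives $\ptr{2}{\Lambda'(\rho)} = \Phi_1(\ptr{2}{\Lambda(\rho)}) = \Phi_1(\rho)$, and the symmetric computation using \eqref{eq:broad-chan-2} yields $\ptr{1}{\Lambda'(\rho)} = \Phi_2(\rho)$. Thus $\Lambda'$ satisfies \eqref{eq:comp-chan-1}--\eqref{eq:comp-chan-2} on $\tests$, which is exactly $\tests$-compatibility.

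For part~(b), the ``only if'' direction is immediate from part~(a). For the converse, assume $\Phi_1$ and $\Phi_2$ are $\tests$-compatible, witnessed by a channel $\Lambda: \state \to \state \otimes \state$ with $\ptr{2}{\Lambda(\rho)} = \Phi_1(\rho)$ and $\ptr{1}{\Lambda(\rho)} = \Phi_2(\rho)$ for all $\rho \in \tests$. I would try to reuse this same $\Lambda$ directly as a broadcasting channel, and idempotence does the rest: for every $\rho \in \tests$,
\begin{equation*}
\Phi_1(\ptr{2}{\Lambda(\rho)}) = \Phi_1(\Phi_1(\rho)) = \Phi_1^2(\rho) = \Phi_1(\rho),
\end{equation*}
and analogously $\Phi_2(\ptr{1}{\Lambda(\rho)}) = \Phi_2^2(\rho) = \Phi_2(\rho)$. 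Hence $\Lambda$ satisfies \eqref{eq:broad-chan-1}--\eqref{eq:broad-chan-2}, so $\Phi_1$ and $\Phi_2$ are $\tests$-broadcastable.

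The only computation demanding any care is the partial-trace identity in part~(a); everything else is a direct substitution, so I do not anticipate a serious obstacle. The conceptual content is that broadcasting requires $\Lambda$ to copy the state only up to the coarse-graining imposed by $\Phi_1$ and $\Phi_2$, whereas compatibility requires $\Lambda$ to reproduce $\Phi_1(\rho)$ and $\Phi_2(\rho)$ themselves; post-composing with $\Phi_1 \otimes \Phi_2$ converts the weaker condition into the stronger one, while in the converse idempotence is precisely what licenses recycling a compatibility channel as a broadcasting channel. This also makes transparent why the equivalence in part~(b) should not be expected without idempotence, consistent with broadcastability being strictly stronger than compatibility in general.
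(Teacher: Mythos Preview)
Your proof is correct and follows essentially the same approach as the paper: in part~(a) you construct $\Lambda' = (\Phi_1 \otimes \Phi_2) \circ \Lambda$ exactly as the paper does, and in part~(b) you reuse the compatibility channel directly via idempotence, which is again precisely the paper's argument. Your version is somewhat more detailed (in particular, you spell out the partial-trace identity $\ptr{2}{(\Phi_1 \otimes \Phi_2)(\omega)} = \Phi_1(\ptr{2}{\omega})$ that the paper leaves implicit), but the substance is identical.
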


\begin{proof}
\begin{itemize}
\item[(a)] Assume that $\Lambda$ satisfies \eqref{eq:broad-chan-1}--\eqref{eq:broad-chan-2}. We define another channel $\Lambda'$ as $\Lambda'=(\Phi_1 \otimes \Phi_2)\circ \Lambda$. Then $\Lambda'$ (in the place of $\Lambda$) satisfies \eqref{eq:comp-chan-1}--\eqref{eq:comp-chan-2}.
\item[(b)] We need to prove the 'if' part as the other direction holds generally. Assume that $\Lambda$ satisfies \eqref{eq:comp-chan-1}--\eqref{eq:comp-chan-2}, then $\Phi_1(\ptr{2}{\Lambda(\rho)}) = \Phi_1(\Phi_1(\rho)) = \Phi_1(\rho)$. It follows that $\Lambda$ satisfies \eqref{eq:broad-chan-1}--\eqref{eq:broad-chan-2}.
\end{itemize}
\end{proof}

In the following example we demonstrate that broadcastability is a strictly stronger relation for channels than compatibility.

\begin{example}(\emph{Invertible channels are not broadcastable but can be compatible})
Assume $\Phi_1$ and $\Phi_2$ are channels that are invertible, but we do not require the inverses to be positive. If $\Phi_1$ and $\Phi_2$ are broadcastable, we can apply the inverse maps to \eqref{eq:broad-chan-1} and \eqref{eq:broad-chan-2} and get $\rho = \ptr{2}{\Lambda(\rho)} = \ptr{1}{\Lambda(\rho)}$. It follows that $\Lambda$ is a universal broadcasting device, which contradicts the no-broadcasting theorem. We conclude that a pair of invertible channels is not broadcastable. However, there are compatible pairs of invertible channels. For instance, a partially dephasing channel $\Delta_\mu(\rho) = \mu \rho + (1-\mu) \frac{\id}{\dim(\Ha)}$ is invertible for all parameters $0<\mu \leq 1$. Two such channels are compatible for suitably small parameters \cite{Haapasalo19,girard2021jordan}.
\end{example}

Factorization maps are special kind of channels. Hence, in the terminology of Def.~\ref{def:broadcast-def} the result of Prop.~\ref{prop:reform-broadcast} becomes:

\begin{corollary}
A broadcasting test $(\tests, \testa, \testb)$ is broadcastable if and only if the triple $(\tests, F_\testa, F_\testb)$ is broadcastable.
\end{corollary}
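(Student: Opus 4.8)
The plan is to observe that this corollary is essentially Proposition~\ref{prop:reform-broadcast} read through the lens of Definition~\ref{def:broadcast-def}, so the proof should be a short chain of equivalences rather than a fresh argument. First I would set $\Phi_1 = F_\testa$ and $\Phi_2 = F_\testb$. This substitution is legitimate because, as asserted in the paragraph immediately preceding the statement, factorization maps are channels: each $F_\testa$ is a channel onto the quotient state space $\state_\testa$, so the triple $(\tests, F_\testa, F_\testb)$ is a valid instance of the data appearing in Definition~\ref{def:broadcast-def}.

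With this identification, the defining conditions~\eqref{eq:broad-chan-1}--\eqref{eq:broad-chan-2} of broadcastability for the triple $(\tests, F_\testa, F_\testb)$ read
\begin{align*}
F_\testa(\rho) &= F_\testa(\ptr{2}{\Lambda(\rho)}), \\
F_\testb(\rho) &= F_\testb(\ptr{1}{\Lambda(\rho)}),
\end{align*}
for all $\rho \in \tests$, which are verbatim conditions~\eqref{eq:reform-broadcast-1}--\eqref{eq:reform-broadcast-2}. Hence a channel $\Lambda$ witnesses broadcastability of $(\tests, F_\testa, F_\testb)$ in the sense of Definition~\ref{def:broadcast-def} precisely when it satisfies the hypothesis of Proposition~\ref{prop:reform-broadcast}, i.e.\ precisely when it passes the broadcasting test $(\tests, \testa, \testb)$. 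Quantifying over the existence of such a $\Lambda$ on both sides then gives the two-way equivalence claimed in the corollary.

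The only points requiring care are bookkeeping rather than substance, and I expect no genuine obstacle. I would confirm that $F_\testa$ meets whatever regularity Definition~\ref{def:broadcast-def} demands of the maps $\Phi_i$, namely that $\state_\testa$ is a legitimate state space and $F_\testa$ a channel onto it (which is exactly the assertion invoked above). I would also check the convexity hypothesis on $\tests$ imposed in Definition~\ref{def:broadcast-def}: since the defining conditions are linear in $\rho$, a channel passes on $\tests$ if and only if it passes on $\conv{\tests}$ (cf.\ the opening remark of Sec.~\ref{sec:reformulation}), so one may harmlessly pass to the convex hull if the hypothesis is to be read strictly. Once factorization maps are accepted as channels, the corollary is an immediate transcription of Proposition~\ref{prop:reform-broadcast} into the language of Definition~\ref{def:broadcast-def}.
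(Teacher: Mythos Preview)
Your proposal is correct and matches the paper's approach exactly: the paper does not even write out a separate proof, simply noting that ``factorization maps are special kind of channels'' and that the corollary is Proposition~\ref{prop:reform-broadcast} restated in the terminology of Definition~\ref{def:broadcast-def}. Your additional remark about the convexity hypothesis on $\tests$ is a reasonable piece of bookkeeping the paper leaves implicit.
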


As for subsets of measurements, we can assign factorization maps for subsets of channels. For our purposes, it is enough to have the following definition for single channels although an analogous definition can be phrased for subset of channels. For $\rho, \sigma \in \state$, we define $\rho \approx_\Phi \sigma$ if $\Phi(\rho) = \Phi(\sigma)$. Let $\state_{\Phi}$ be the set of equivalence classes of $\approx_\Phi$ and let $F_\Phi: \state \to \state_{\Phi}$ be the corresponding factorization map. We introduce a map $\Phi'$ on $\state_{\Phi}$ as
\begin{equation}
\Phi'( [\rho] ) = \Phi(\rho)
\end{equation}
for all $\rho \in \state$. This is again well-defined since by construction $\Phi$ is constant on the equivalence classes of $\approx_\Phi$.
We therefore have
\begin{equation}
\Phi = \Phi' \circ F_\Phi
\end{equation}
and the channel $\Phi$ hence factorizes through $F_\Phi$.
We note that $F_\Phi$ is the identity map for injective channels $\Phi$.

We will next show that for broadcasting tests one can restrict to the case of broadcastable factorization channels.

\begin{proposition}
A triple $(\tests, \Phi_1, \Phi_2)$ is broadcastable if and only if the triple $(\tests, F_{\Phi_1}, F_{\Phi_2})$ is broadcastable.
\end{proposition}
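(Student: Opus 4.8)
The plan is to show that for any \emph{fixed} channel $\Lambda: \state \to \state \otimes \state$, the pair of conditions \eqref{eq:broad-chan-1}--\eqref{eq:broad-chan-2} for the triple $(\tests, \Phi_1, \Phi_2)$ holds if and only if the corresponding pair for $(\tests, F_{\Phi_1}, F_{\Phi_2})$ holds. Once this pointwise equivalence is established, the biconditional about broadcastability follows simply by applying the existential quantifier over $\Lambda$ to both sides: the same witnessing channel serves one triple exactly when it serves the other, so no new broadcasting channel has to be constructed.

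The key ingredient is the factorization $\Phi_i = \Phi_i' \circ F_{\Phi_i}$ together with the injectivity of the induced map $\Phi_i'$. Injectivity is immediate from the definition of $\approx_{\Phi_i}$: if $\Phi_i'([\rho]) = \Phi_i'([\sigma])$ then $\Phi_i(\rho) = \Phi_i(\sigma)$, hence $\rho \approx_{\Phi_i} \sigma$ and therefore $[\rho] = [\sigma]$. Combining this with $\Phi_i = \Phi_i' \circ F_{\Phi_i}$ yields, for any two states $\alpha, \beta \in \state$, the equivalence $\Phi_i(\alpha) = \Phi_i(\beta) \iff F_{\Phi_i}(\alpha) = F_{\Phi_i}(\beta)$: the ``if'' direction is obtained by applying $\Phi_i'$ to both sides, while the ``only if'' direction follows from injectivity of $\Phi_i'$ applied to the equality $\Phi_i'(F_{\Phi_i}(\alpha)) = \Phi_i'(F_{\Phi_i}(\beta))$.

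I would then instantiate this equivalence twice. Taking $\alpha = \rho$ with $\rho \in \tests$ and $\beta = \ptr{2}{\Lambda(\rho)}$ shows that \eqref{eq:broad-chan-1} for $\Phi_1$ is equivalent to $F_{\Phi_1}(\rho) = F_{\Phi_1}(\ptr{2}{\Lambda(\rho)})$, i.e. the analogue of \eqref{eq:reform-broadcast-1} for the factorization channel $F_{\Phi_1}$; taking $\beta = \ptr{1}{\Lambda(\rho)}$ does the same for \eqref{eq:broad-chan-2} and $F_{\Phi_2}$. Since these equivalences hold for every $\rho \in \tests$, the full defining condition for $(\tests, \Phi_1, \Phi_2)$ with channel $\Lambda$ is equivalent to that for $(\tests, F_{\Phi_1}, F_{\Phi_2})$ with the same $\Lambda$.

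Because the argument is purely definitional, I do not expect a genuine obstacle; the proof closely mirrors that of Prop.~\ref{prop:reform-broadcast}, with the measurement factorization maps $F_\testa, F_\testb$ replaced by the channel factorization maps $F_{\Phi_1}, F_{\Phi_2}$. The one point worth stating explicitly is that, unlike part (a) of the preceding proposition where one had to build $\Lambda' = (\Phi_1 \otimes \Phi_2) \circ \Lambda$, here the existential witness is \emph{shared} between the two triples, which is exactly what makes the equivalence hold in both directions without any reconstruction.
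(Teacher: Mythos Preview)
Your proposal is correct and follows essentially the same approach as the paper: both arguments show that the \emph{same} channel $\Lambda$ witnesses broadcastability for $(\tests,\Phi_1,\Phi_2)$ if and only if it does for $(\tests,F_{\Phi_1},F_{\Phi_2})$, using the factorization $\Phi_i=\Phi_i'\circ F_{\Phi_i}$ in one direction and the definition of $\approx_{\Phi_i}$ (equivalently, injectivity of $\Phi_i'$) in the other. Your explicit appeal to the injectivity of $\Phi_i'$ is just a rephrasing of the paper's step ``$\Phi_1(\rho)=\Phi_1(\ptr{2}{\Lambda(\rho)})\Rightarrow \rho\approx_{\Phi_1}\ptr{2}{\Lambda(\rho)}$''.
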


\begin{proof}
The proof is almost the same as proof of Prop.~\ref{prop:reform-broadcast}. Let $\Lambda: \state \to \state \otimes \state$ be the the channel that passes the broadcasting test $(\tests, \Phi_1, \Phi_2)$. Let $\rho \in \tests$. We have
\begin{equation}
\Phi_1 (\rho) = \Phi_1(\ptr{2}{\Lambda(\rho)})
\end{equation}
from which it follows that
\begin{equation}
\rho \approx_{\Phi_1} \ptr{2}{\Lambda(\rho)}
\end{equation}
and so we must have
\begin{equation}
F_{\Phi_1}(\rho) = F_{\Phi_1}(\ptr{2}{\Lambda(\rho)}) \, ,
\end{equation}
which shows that $\Lambda$ passes the broadcasting test $(\tests, F_{\Phi_1}, \Phi_2)$.

Let us then assume that $\Lambda$ passes the broadcasting test $(\tests, F_{\Phi_1}, \Phi_2)$ and let $\rho \in \tests$. We have
\begin{equation}
\Phi_1(\rho) = \Phi'_1 ( F_{\Phi_1}(\rho) ) = \Phi'_1(F_{\Phi_1}(\ptr{2}{\Lambda(\rho)})) = \Phi_1(\ptr{2}{\Lambda(\rho)}) \, ,
\end{equation}
which shows that $\Lambda$ passes the broadcasting test $(\tests, \Phi_1, \Phi_2)$.
\end{proof}

One can also obtain the same result as above for broadcasting tests with two subsets of channels, similarly as we did earlier with two subsets of measurements. The results and their proofs would be the same.

Moreover one can also boil down the broadcasting of channels to broadcasting of measurements:
\begin{theorem}
Let $\Phi_i: \state \to \state_i$, $i \in \{1,2\}$ be channels and define
\begin{align}
\testa & = \{ \A \circ \Phi_1: \A \in \obs(\state_1) \}, \\
\testb & = \{ \B \circ \Phi_2: \B \in \obs(\state_2) \} \, .
\end{align}
Then $(\tests, \Phi_1, \Phi_2)$ is broadcastable if and only if $(\tests, \testa, \testb)$ is broadcastable.
\end{theorem}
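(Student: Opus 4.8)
The plan is to reduce both sides of the stated equivalence to one and the same condition on factorization maps, exploiting the reformulations already proved. The whole point is that the test measurements in $\testa$ and $\testb$ carry exactly the same information as the channels $\Phi_1$ and $\Phi_2$: they induce identical equivalence relations on $\state$, hence identical factorization maps.

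First I would analyze the relation $\approx_\testa$. For $\rho,\sigma\in\state$, by definition $\rho\approx_\testa\sigma$ holds precisely when $(\A\circ\Phi_1)(\rho)=(\A\circ\Phi_1)(\sigma)$ for every $\A\in\obs(\state_1)$, that is, $\A(\Phi_1(\rho))=\A(\Phi_1(\sigma))$ for all measurements $\A$ on $\state_1$. Since the full collection $\obs(\state_1)$ of measurements on $\state_1$ is informationally complete---any two distinct states are separated by some measurement---this is equivalent to $\Phi_1(\rho)=\Phi_1(\sigma)$, which is exactly $\rho\approx_{\Phi_1}\sigma$. Thus $\approx_\testa$ and $\approx_{\Phi_1}$ coincide, and in the same way $\approx_\testb$ and $\approx_{\Phi_2}$ coincide. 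Consequently $F_\testa$ and $F_{\Phi_1}$ partition $\state$ in the same manner (and likewise $F_\testb$ and $F_{\Phi_2}$), so conditions \eqref{eq:reform-broadcast-1}--\eqref{eq:reform-broadcast-2} written for $F_\testa,F_\testb$ are literally the same as \eqref{eq:broad-chan-1}--\eqref{eq:broad-chan-2} written for $F_{\Phi_1},F_{\Phi_2}$.

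With this identification the theorem follows by chaining the earlier equivalences. By Prop.~\ref{prop:reform-broadcast} (equivalently, the corollary restating it), $(\tests,\testa,\testb)$ is broadcastable if and only if $(\tests,F_\testa,F_\testb)$ is broadcastable. By the previous paragraph this is the same as broadcastability of $(\tests,F_{\Phi_1},F_{\Phi_2})$, which by the proposition reducing channels to their factorization maps is equivalent to broadcastability of $(\tests,\Phi_1,\Phi_2)$. Reading the chain in either direction yields both implications; notably the very same witnessing channel $\Lambda$ serves on both sides, so no new channel construction is required.

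I expect the only step carrying genuine content to be the collapse $\approx_\testa\,=\,\approx_{\Phi_1}$, which is exactly where informational completeness of $\obs(\state_1)$ enters. One minor point I would state explicitly is that the target sets $\state_\testa$ and $\state_{\Phi_1}$, although defined separately, are canonically identified through their common partition of $\state$, so that $F_\testa(\rho)=F_\testa(\ptr{2}{\Lambda(\rho)})$ and $F_{\Phi_1}(\rho)=F_{\Phi_1}(\ptr{2}{\Lambda(\rho)})$ are genuinely the same requirement. Everything else is routine.
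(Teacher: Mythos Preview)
Your proposal is correct and follows essentially the same approach as the paper: both proofs establish that $\approx_\testa\,=\,\approx_{\Phi_1}$ (equivalently $F_\testa=F_{\Phi_1}$, and likewise on the $\testb$-side) using informational completeness of $\obs(\state_1)$, and then let the previously proved reformulations do the rest. The paper's proof is a bit terser, but the content is identical.
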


\begin{proof}
The proof follows from $F_{\Phi_1} = F_\testa$ and $F_{\Phi_2} = F_\testb$. We prove the first equality, the second one follows in the same way. Assume that $\Phi_1(\rho) \neq \Phi_1(\sigma)$. Then there is a measurement $\A \in \obs(\state_1)$ such that $\A(\Phi_1(\rho)) \neq \A(\Phi_1(\sigma))$ and so $\A'(\rho) \neq \A'(\sigma)$ for $\A' = \A \circ \Phi_1$. Conversely, assume that $\A'(\rho) \neq \A'(\sigma)$ for some $\A' \in \testa$. Then $\A' = \A \circ \Phi_1$ for some $\A \in \obs(\state_1)$ and we have $\A(\Phi_1(\rho)) \neq \A(\Phi_1(\sigma))$. It follows that we must have $\Phi_1(\rho) \neq \Phi_1(\sigma)$.
\end{proof}

%%%%%%%%%%%%%%%%%%%%%%%%
\section{Special cases}\label{sec:special}
%%%%%%%%%%%%%%%%%%%%%%%%

In this section we are going to look closely at special cases of broadcasting of measurements. We have already seen that all other introduced scenarios can be reduced to this one. We will first show that commutativity of the measurements or some measurements and states is sufficient for broadcastability.
\begin{proposition} \label{prop:special-commuteAB}
If $\testa$ and $\testb$ are mutually commuting, then a broadcasting test $(\tests, \testa, \testb)$ is broadcastable.
\end{proposition}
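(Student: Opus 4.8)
The plan is to exploit commutativity to simultaneously diagonalize all the test effects, and then to exhibit one explicit channel that is universally ``invisible'' to the two factorization maps. The assumption that $\testa$ and $\testb$ are mutually commuting means that all effects of all measurements in $\testa\cup\testb$ pairwise commute. First I would use this to pass to a common projection-valued measure: the $*$-subalgebra of operators on the underlying $d$-dimensional Hilbert space generated by these effects is commutative, hence spanned by a family of mutually orthogonal projections $\{P_k\}_k$ (its minimal projections), with $\sum_k P_k = \id$ and $P_k P_l = \delta_{kl} P_k$. Every test effect is then a real function of this common PVM, i.e. each $\A_i \in \testa$ can be written $\A_i = \sum_k a_{i,k} P_k$ and each $\B_j \in \testb$ as $\B_j = \sum_k b_{j,k} P_k$. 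Consequently the outcome statistics of any test measurement in a state $\rho$ depend on $\rho$ only through the block probabilities $p_k := \tr{\rho P_k}$.

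Next I would reduce the broadcasting conditions to preserving these block probabilities on each output marginal. By Prop.~\ref{prop:reform-broadcast} it suffices to ensure $F_\testa(\rho) = F_\testa(\ptr{2}{\Lambda(\rho)})$ and $F_\testb(\rho) = F_\testb(\ptr{1}{\Lambda(\rho)})$, and by the previous paragraph two states $\rho,\rho'$ are $\testa$-equivalent (resp. $\testb$-equivalent) whenever $\tr{\rho P_k} = \tr{\rho' P_k}$ for every $k$. Thus it is enough to produce a channel whose first marginal reproduces the $P_k$-statistics of $\rho$ and whose second marginal likewise reproduces them.

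Then I would exhibit the channel explicitly. For each $k$ fix an arbitrary state $\sigma_k$ supported in $\ran P_k$, so that $\tr{\sigma_k P_l} = \delta_{kl}$, and define
\begin{equation}
\Lambda(\rho) = \sum_k (P_k \rho P_k) \otimes \sigma_k \, .
\end{equation}
Each summand $\rho \mapsto (P_k \rho P_k)\otimes \sigma_k$ is completely positive, and $\tr{\Lambda(\rho)} = \sum_k \tr{\rho P_k} = \tr{\rho}$, so $\Lambda$ is a channel. Its first marginal is the pinching $\ptr{2}{\Lambda(\rho)} = \sum_k P_k \rho P_k$, which satisfies $\tr{P_l \sum_k P_k \rho P_k} = \tr{\rho P_l}$ and hence reproduces every $\A\in\testa$; its second marginal is $\ptr{1}{\Lambda(\rho)} = \sum_k \tr{\rho P_k}\,\sigma_k$, whose statistics are $\sum_k \tr{\rho P_k}\tr{\sigma_k P_l} = \tr{\rho P_l}$, so it reproduces every $\B\in\testb$. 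This verifies \eqref{eq:broad-A} and \eqref{eq:broad-B} for all $\rho\in\state$, in particular for all $\rho\in\tests$, so $(\tests,\testa,\testb)$ is broadcastable.

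I do not expect a serious obstacle; the main content is recognizing that both marginal conditions collapse to preserving the common block probabilities and that a single measure-and-reprepare channel achieves this on both sides at once. The only point that needs care is the higher-rank case: when the minimal projections $P_k$ are not rank one, the ``classical copy'' placed on the second output must be a genuine state $\sigma_k$ inside $\ran P_k$ rather than a rank-one projector, since it is precisely $\tr{\sigma_k P_l} = \delta_{kl}$ that makes the second marginal reproduce the block probabilities. Note also that the construction works for arbitrary $\tests$, reflecting that commutativity is a very strong sufficient condition for broadcastability.
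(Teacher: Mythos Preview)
Your argument is internally sound, but it proves a weaker statement than the proposition because you have read the hypothesis too strongly. ``$\testa$ and $\testb$ are mutually commuting'' means that every effect of every $\A\in\testa$ commutes with every effect of every $\B\in\testb$; it does \emph{not} require effects within $\testa$ (or within $\testb$) to commute with one another. This is exactly why the paper invokes the structure theorem for two commuting subalgebras (Scholz--Werner): under that hypothesis one gets $\Ha\cong\bigoplus_k \Ha_k^\testa\otimes\Ha_k^\testb$ with the $\testa$-effects acting only on the first tensor leg and the $\testb$-effects only on the second. Your step ``the $*$-subalgebra generated by these effects is commutative, hence spanned by mutually orthogonal projections $P_k$'' breaks down under the correct hypothesis.

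A concrete obstruction: take $\Ha=\complex^2\otimes\complex^2$, let $\testa=\{\X\otimes\id,\ \Z\otimes\id\}$ and $\testb=\{\id\otimes\X,\ \id\otimes\Z\}$. Every $\testa$-effect commutes with every $\testb$-effect, so the hypothesis of the proposition is satisfied, yet the effects of $\X\otimes\id$ and $\Z\otimes\id$ do not commute, so there is no common PVM $\{P_k\}$ through which all test effects factor, and your measure-and-reprepare channel cannot even be defined. The paper's approach handles this by routing the first tensor leg to output~1 and the second to output~2 within each block of the direct-sum decomposition. If you want to repair your argument, replace the single commutative algebra by the pair (algebra generated by $\testa$, its commutant containing $\testb$) and use the block-tensor structure rather than a joint diagonalization.
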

\begin{proof}
If $\testa$ and $\testb$ are mutually commuting, then there is an isomorphism $U: \Ha \to \oplus_{k=1}^n (\Ha^\testa_k \otimes \Ha^\testb_k)$ such that $U \A_i U^* = \oplus_{k=1}^n \A'_{i|k} \otimes \id_\testb$ and $U \B_j U^* = \oplus_{k=1}^n \id_\testa \otimes \B'_{j|k}$ \cite[Theorem 1]{ScholzWerner-Tsirelson}, where $\oplus$ denotes the direct sum of Hilbert spaces and operators respectively. The result easily follows, since measurements of the form $\A'_{i|k} \otimes \id_\testb$ and $\id_\testa \otimes \B'_{j|k}$ act on separate components and are hence straightforward to broadcast.
\end{proof}

Commutativity is not a necessary condition, we will provide counter-examples of broadcastable broadcasting tests of the form $(\state, \testa, \testa)$ in Exm.~\ref{ex:saa-nonCommutative} and $(\state, \testa, \testb)$ in Exm.~\ref{ex:sab-nonCommutative} where the respective measurements do not commute.

\begin{proposition} \label{prop:special-commuteAT}
If $\testa$ and $\tests$ are mutually commuting, then a broadcasting test $(\tests, \testa, \testb)$ is broadcastable.
\end{proposition}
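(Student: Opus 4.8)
The plan is to exhibit a single explicit channel $\Lambda$ that works whatever the unrestricted set $\testb$ happens to be. The idea is to let the second output carry an exact copy of the input state while the first output only has to reproduce the $\testa$-statistics. Since preserving \emph{all} measurements on a marginal is harder than preserving a subset, it is enough to arrange $\ptr{1}{\Lambda(\rho)}=\rho$ for every $\rho\in\tests$; this automatically yields \eqref{eq:broad-B} for \emph{any} $\testb$, so the commutativity hypothesis will only be needed to control the $\testa$-side, i.e.\ \eqref{eq:broad-A}.

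First I would encode the hypothesis algebraically. Let $\mathcal{M}$ be the $*$-algebra generated by all effects of the measurements in $\testa$. The assumption that $\testa$ and $\tests$ mutually commute says exactly that every $\rho\in\tests$ lies in the commutant $\mathcal{M}'$. Using the structure theorem for finite-dimensional $C^*$-algebras, I would fix a decomposition $\Ha\cong\bigoplus_k \Ha^L_k\otimes\Ha^R_k$ in which $\mathcal{M}=\bigoplus_k \mathcal{L}(\Ha^L_k)\otimes\id$ and $\mathcal{M}'=\bigoplus_k\id\otimes\mathcal{L}(\Ha^R_k)$, with $Q_k$ the central projection onto the $k$-th block. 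In these coordinates each test effect $\A_i$ is block-diagonal and acts trivially on the right factor, whereas each $\rho\in\tests$ is block-diagonal and, on every block, is proportional to the identity on the left factor (the factor where the test effects live).

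Next I would define the broadcasting channel by
\begin{equation}
\Lambda(\rho)=\sum_k \frac{Q_k}{\tr{Q_k}}\otimes Q_k\rho Q_k ,
\end{equation}
which is manifestly completely positive and trace preserving. For $\rho\in\tests$ the marginals are easy to read off: since $\rho$ commutes with every central projection $Q_k$, one gets $\ptr{1}{\Lambda(\rho)}=\sum_k Q_k\rho Q_k=\rho$, which settles \eqref{eq:broad-B} for the given (indeed any) $\testb$. For the other marginal, $\ptr{2}{\Lambda(\rho)}=\sum_k \tr{Q_k\rho}\,Q_k/\tr{Q_k}$, and I would then verify $\A(\ptr{2}{\Lambda(\rho)})=\A(\rho)$ for every $\A\in\testa$.

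The main obstacle, and really the only nontrivial point, is this last verification. One must confirm that $\tr{(Q_k\rho Q_k)\A_i}$ depends on $\rho$ only through the block weight $\tr{Q_k\rho}$, with proportionality constant $\tr{Q_k\A_i}/\tr{Q_k}$, so that replacing each block $Q_k\rho Q_k$ by the normalized block identity $\tr{Q_k\rho}\,Q_k/\tr{Q_k}$ does not change the pairing with any test effect. This is exactly the factorization $\tr{(\id_{\Ha^L_k}\otimes\sigma_k)(A_{i,k}\otimes\id_{\Ha^R_k})}=\tr{A_{i,k}}\,\tr{\sigma_k}$ guaranteed by $\rho\in\mathcal{M}'$, and it is the step that genuinely uses the mutual commutativity of $\testa$ and $\tests$. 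Conceptually, $\ptr{2}{\Lambda(\rho)}$ is just the trace-preserving conditional expectation of $\rho$ onto $\mathcal{M}$, which by construction preserves the expectation value of every operator in $\mathcal{M}$, in particular of every test effect; the explicit formula above is a convenient way to realize it as one marginal of a valid broadcasting channel.
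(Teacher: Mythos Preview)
Your proof is correct and follows essentially the same route as the paper: both invoke the block decomposition $\Ha\cong\bigoplus_k\Ha^L_k\otimes\Ha^R_k$ arising from the mutual commutativity of $\testa$ and $\tests$ (the paper cites the Scholz--Werner structure theorem for this) and then observe that for $\rho\in\tests$ the $\testa$-statistics depend only on the block weights. Your version is in fact more explicit than the paper's, actually exhibiting the broadcasting channel $\Lambda(\rho)=\sum_k (Q_k/\tr{Q_k})\otimes Q_k\rho Q_k$ and verifying both marginals, whereas the paper leaves the construction implicit.
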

\begin{proof}
We can again use \cite[Theorem 1.]{ScholzWerner-Tsirelson} to show that $\Ha$ is
isomorphic to $\oplus_{k=1}^n \Ha^\testa_k \otimes \Ha^\tests_k$ and, up to an
isomorphism, we have $\A_i = \oplus_{k=1}^n \A'_{i|k} \otimes \id_\tests$ and $\rho =
\oplus_{k=1}^n p_k \frac{\id_\testa}{\dim(\Ha_k^\testa)} \otimes \rho_k'$, where now $p_k \geq 0$ and $\sum_{k=1}^n p_k = 1$ are just to ensure proper normalization of $\rho$. The result follows immediately, since $\A$ measures only the classical information encoded in the probability distribution $p_k$.
\end{proof}

%%%%%%%%%%%%%%%%%%%%%%%%%%%%%%
\subsection{The case when $\state \subset \lin{\tests}$}
%%%%%%%%%%%%%%%%%%%%%%%%%%%%%%
As the first special case we will assume that $\state \subset \lin{\tests}$, which is a generalization of the case when $\tests = \state$. If all of the measurements in $\testa$ commute, then according to Prop.~\ref{prop:special-commuteAB} the test is broadcastable. This is the case when all of the measurements in $\testa$ can be post-processed from a single projective measurement. To be more specific, let $\Ha_3$ be the qutrit Hilbert space, $\dim(\Ha_3) = 3$, with the orthonormal basis $\{\ket{i}\}_{i=1}^3$, and let $\testa = \{ \A \}$ such that $A_i = \ketbra{i}$. Then $(\tests, \testa, \testa)$ is broadcastable, the channel that passes the broadcasting test is defined as
\begin{equation} \label{eq:saa-dim3Lambda}
\Lambda(\rho) = \sum_{i=1}^3 \tr{\rho \A_i} \ketbra{ii} \, .
\end{equation}
To see that $\Lambda$ passes the broadcasting test, simply observe that
\begin{equation}
\ptr{1}{\Lambda(\rho)} = \ptr{2}{\Lambda(\rho)} = \sum_{i=1}^3 \tr{\rho \A_i} \ketbra{i}
\end{equation}
and so measuring $\A$ will yield the correct probability. In the following example we will show that commutativity is only sufficient but not necessary condition for broadcastability; the main idea is that if we extend the Hilbert space, then we have some freedom in the choice of $\A_i$ in \eqref{eq:saa-dim3Lambda}.

\begin{example}[\emph{Broadcastable but noncommuting measurements in the $(\state, \testa, \testa)$ scenario}] \label{ex:saa-nonCommutative}
Consider the Hilbert space $\Ha_5$, $\dim(\Ha_5) = 5$, with the orthonormal basis $\{\ket{i}\}_{i=1}^5$. Let $\tests = \state$ and $\testa = \{\A\}$ defined for $i \in \{1,2,3\}$ as
\begin{equation}
\A_i = \ketbra{i} + \M_i
\end{equation}
where $\M$ is an arbitrary measurement supported on $\lin{\{\ket{4}, \ket{5}\}}$, that is, $\M_i \geq 0$ for all $i \in \{1,2,3\}$, and $\sum_{i=1}^3 \M_i = \ketbra{4} + \ketbra{5}$.
This implies that $\M_i \ket{j} = 0$ for all $i \in \{1,2,3\}$ and $j \in \{1,2,3\}$. One can choose $\M_i$ such that they do not mutually commute, which then implies that also $\A_i$ do not commute. To see that $(\tests, \testa, \testa)$ is broadcastable, simply consider channel defined analogically as in \eqref{eq:saa-dim3Lambda}, that is $\Lambda(\rho) = \sum_{i=1}^3 \tr{\rho \A_i} \ketbra{ii}$. We again have
\begin{equation}
\ptr{1}{\Lambda(\rho)} = \ptr{2}{\Lambda(\rho)} = \sum_{i=1}^3 \tr{\rho \A_i} \ketbra{i}
\end{equation}
and the result follows from $\tr{\ketbra{i} \A_j} = \delta_{ij}$ for all $i \in \{1,2,3\}$ and $j \in \{1,2,3\}$.
\end{example}

\begin{restatable}{theorem}{thmSaa} \label{thm:saa}
Assume that $\state \subset \lin{\tests}$. Then $(\tests, \testa, \testa)$ is broadcastable if and only if there is a single measurement $\G$ such that the norm of every non-zero operator in $\G$ is $1$, i.e., $\no{\G_\lambda} \in \{0, 1\}$, $\forall \lambda$ and every measurement in $\A$ can be post-processed from $\G$, i.e., for every $\A \in \testa$ and every $i$ we have
\begin{equation}
\A_i = \sum_\lambda p(i|\A, \lambda) \G_\lambda
\end{equation}
where $p(i|\A, \lambda) \geq 0$ and $\sum_i p(i|\A, \lambda) = 1$.
\end{restatable}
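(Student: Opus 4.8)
The plan is to prove the two implications separately, with the "only if" part being by far the harder one.

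\emph{Sufficiency.} First I would assume such a $\G$ exists and build an explicit measure-and-prepare channel. For each nonzero $\G_\lambda$ the hypothesis $\no{\G_\lambda}=1$ together with $\G_\lambda\geq 0$ yields a unit eigenvector $\ket{v_\lambda}$ with $\G_\lambda\ket{v_\lambda}=\ket{v_\lambda}$. I would first note that these vectors are orthonormal: since $\sum_\mu \G_\mu=\id$ with each $\G_\mu\geq0$, the equality $\bra{v_\lambda}\G_\lambda\ket{v_\lambda}=1$ forces $\bra{v_\lambda}\G_\mu\ket{v_\lambda}=0$, hence $\G_\mu\ket{v_\lambda}=0$, for every $\mu\neq\lambda$, and then $\ip{v_\mu}{v_\lambda}=\bra{v_\mu}\G_\mu\ket{v_\lambda}=0$. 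I would then set
\[
\Lambda(\rho)=\sum_\lambda \tr{\rho\G_\lambda}\,\ketbra{v_\lambda}\otimes\ketbra{v_\lambda},
\]
which is manifestly completely positive and trace preserving, with both marginals equal to $\sum_\lambda\tr{\rho\G_\lambda}\ketbra{v_\lambda}$. For any $\A\in\testa$ one then computes $\A_i(\ptr{2}{\Lambda(\rho)})=\sum_\lambda\tr{\rho\G_\lambda}\bra{v_\lambda}\A_i\ket{v_\lambda}$, and the crucial point is that $\bra{v_\lambda}\A_i\ket{v_\lambda}=\sum_\mu p(i|\A,\mu)\bra{v_\lambda}\G_\mu\ket{v_\lambda}=p(i|\A,\lambda)$, so the sum collapses to $\tr{\rho\A_i}=\A_i(\rho)$, and symmetrically for the other marginal.

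\emph{Necessity.} Here the plan is to recover $\G$ from an arbitrary passing channel $\Lambda$ by isolating its pointer structure through a fixed-point analysis. Since $\state\subset\lin{\tests}$ and the test conditions are linear, I may assume they hold for all $\rho\in\state$; equivalently every effect $\A_i$ with $\A\in\testa$ is a fixed point of both dual marginal maps $\Lambda_1^*$ and $\Lambda_2^*$, where $\Lambda_1=\ptr{2}{\Lambda(\cdot)}$ and $\Lambda_2=\ptr{1}{\Lambda(\cdot)}$. The key preparatory move is to symmetrize: because $\testa$ is used on both outputs, the swap-symmetrized channel $\tilde\Lambda=\tfrac12(\Lambda+\swap\circ\Lambda)$ also passes the test, and it has equal marginals $\Phi:=\tfrac12(\Lambda_1+\Lambda_2)$, for which $\Phi^*(\A_i)=\A_i$. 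Consequently every fixed state of $\Phi$ is genuinely (two-sidedly) broadcast by $\tilde\Lambda$, i.e. $\ptr{1}{\tilde\Lambda(\rho)}=\ptr{2}{\tilde\Lambda(\rho)}=\rho$ whenever $\Phi(\rho)=\rho$.

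I would then invoke the generalized no-broadcasting theorem \cite{BaBaLeWi07}: the set $\fix(\Phi)$ of fixed states, being genuinely broadcastable, must lie in a simplex generated by jointly distinguishable states. Combined with the fact that the fixed-point set of a channel is, on its recurrent subspace, the state space of a finite-dimensional C*-algebra, this forces that algebra to be abelian, so $\fix(\Phi)$ is itself a simplex with distinguishable vertices $\sigma_\lambda$ whose supports $\ran P_\lambda$ are mutually orthogonal. To build the measurement I would push the minimal projections $P_\lambda$ through the Cesàro projection onto fixed effects, $\G_\lambda:=\lim_N \tfrac1N\sum_{k<N}(\Phi^*)^k(P_\lambda)$. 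One checks that $\G_\lambda\geq0$, that $\sum_\lambda\G_\lambda=\id$ (because all mass flows into the recurrent subspace), and that $\tr{\sigma_\mu\G_\lambda}=\tr{\sigma_\mu P_\lambda}=\delta_{\lambda\mu}$; the last identity yields $\no{\G_\lambda}=1$ and the linear independence of the $\G_\lambda$. Since a map and its adjoint share the dimension of their eigenvalue-$1$ eigenspaces, $\dim\fix(\Phi^*)=\dim\fix(\Phi)$ equals the number of vertices, whence $\fix(\Phi^*)=\lin{\{\G_\lambda\}}$. Finally each $\A_i\in\fix(\Phi^*)$ expands as $\A_i=\sum_\lambda p(i|\A,\lambda)\G_\lambda$, and pairing with $\sigma_\mu$ identifies $p(i|\A,\lambda)=\tr{\sigma_\lambda\A_i}\in[0,1]$ with $\sum_i p(i|\A,\lambda)=1$, which is exactly the asserted post-processing.

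\emph{Main obstacle.} I expect the decisive difficulty to be the structural step that $\fix(\Phi)$ is a genuine simplex, where the no-broadcasting input of \cite{BaBaLeWi07} has to be married to the C*-algebraic description of channel fixed points. The non-faithful case, in which a nontrivial transient subspace is present (as in Example~\ref{ex:saa-nonCommutative}), requires extra care: one must restrict to the recurrent subspace to obtain the abelian structure, and then verify that the Cesàro-averaged effects $\G_\lambda$ still sum to the identity and faithfully distinguish the vertices $\sigma_\lambda$, so that the post-processing coefficients extracted on the recurrent part also reproduce the $\testa$-effects on the transient part.
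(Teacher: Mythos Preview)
Your sufficiency argument is essentially the paper's: the same measure-and-prepare channel $\Lambda(\rho)=\sum_\lambda\tr{\rho\G_\lambda}\ketbra{v_\lambda v_\lambda}$ and the same verification via $\bra{v_\lambda}\G_\mu\ket{v_\lambda}=\delta_{\lambda\mu}$.

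For necessity you take a genuinely different, more ``quantum'' route than the paper. Both start identically: symmetrize $\Lambda$, set $\Phi=\ptr{1}{\tilde\Lambda(\cdot)}$, and observe $\A_i\in\fix(\Phi^*)$. From there the paper stays abstract: it uses the Ces\`aro projection $\Pi$ onto $\fix(\Phi^*)$ (Lemma~\ref{lemma:saa-fixedPoint}) to regard $\fix(\Phi^*)$ as an ordered vector space, builds a broadcasting map $(\iota^*\otimes\iota^*)\circ\Lambda\circ\Pi^*$ on the \emph{dual} cone $\fix(\Phi^*)^{*+}$, and invokes the GPT version of no-broadcasting \cite{BaBaLeWi07} to conclude that this dual cone is simplicial; the $\G_\lambda$ are then the extreme rays of $\fix(\Phi^*)^+$ and $\no{\G_\lambda}=1$ follows because the dual extreme rays $s_\lambda$ are mapped by $\Pi^*$ to genuine quantum states. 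You instead work on the Schr\"odinger side: the concrete set $\fix(\Phi)\cap\state$ is broadcast by $\tilde\Lambda$, so the quantum no-broadcasting theorem forces all fixed states to commute; you then feed this into the structure theorem for channel fixed points (on the recurrent subspace $\fix(\Phi)$ is the state space of a finite-dimensional C*-algebra, here forced to be abelian), obtain orthogonal-support vertices $\sigma_\lambda$, and recover $\G_\lambda$ by Ces\`aro-averaging the support projections $P_\lambda$. Your pairing argument $\tr{\sigma_\mu\G_\lambda}=\delta_{\lambda\mu}$ and the dimension count $\dim\fix(\Phi^*)=\dim\fix(\Phi)$ then close the proof correctly. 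The trade-off: the paper's argument is self-contained once one has the GPT no-broadcasting result and Lemma~\ref{lemma:saa-fixedPoint}, and it sidesteps the faithful/recurrent-subspace dichotomy entirely by never leaving the effect side; your argument is more concrete and perhaps more transparent to a quantum-information reader, but it imports the Lindblad-type structure theorem for $\fix(\Phi)$ as a black box, and---as you rightly flag---the transient subspace has to be handled by hand (your sketch that $\sum_\lambda\G_\lambda=\id$ because the Ces\`aro average of $\id-\sum_\lambda P_\lambda$ vanishes, verified via pairing against fixed states, is the correct way to do this).
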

\begin{proof}
To see that this is sufficient, we only need to show that $(\state, \{\G\}, \{\G\})$ passes the broadcasting test. If $\no{\G_\lambda} = 1$, then there is unit vector $\ket{\lambda} \in \Ha$ such that $\G_\lambda \ket{\lambda} = \ket{\lambda}$. Since $\sum_\lambda \G_\lambda = \id$, we then must have $\bra{\lambda} \G_{\mu} \ket{\lambda} = \delta_{\lambda \mu}$, where $\lambda$ and $\mu$ both index the outcomes of $\G$. We then construct the channel
\begin{equation}
\Lambda(\rho) = \sum_\lambda \tr{\rho \G_\lambda} \ketbra{\lambda \lambda}.
\end{equation}
We again have $\ptr{1}{\Lambda(\rho)} = \ptr{2}{\Lambda(\rho)} = \sum_\lambda \tr{\rho \G_\lambda} \ketbra{\lambda}$ and the result follows. The proof in the other direction is in the Appendix~\ref{appendix:proofSaa}.
\end{proof}

\begin{corollary}
Assume that $\state \subset \lin{\tests}$ and that $\dim(\Ha) \leq 4$. Then $(\tests, \testa, \testa)$ is broadcastable if and only if for every $\A, \B \in \testa$ we have that $\A_i$ and $\B_j$ commute for all $i,j$.
\end{corollary}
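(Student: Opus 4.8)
The backward implication needs no dimension restriction: the stated condition says precisely that $\testa$ is a commuting family, so $(\tests,\testa,\testa)$ is broadcastable by Prop.~\ref{prop:special-commuteAB} applied with $\testb=\testa$. The content of the corollary is therefore the forward implication, and the plan is to show that in dimension at most four the measurement $\G$ supplied by Theorem~\ref{thm:saa} is forced to be commutative.

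So assume $(\tests,\testa,\testa)$ is broadcastable. Since $\state\subset\lin{\tests}$, Theorem~\ref{thm:saa} provides a single measurement $\G$ whose nonzero effects all have norm $1$ and from which every $\A\in\testa$ is post-processed, $\A_i=\sum_\lambda p(i|\A,\lambda)\G_\lambda$. If I can prove that all effects of $\G$ mutually commute, I am done: commuting effects $\{\G_\lambda\}$ can be simultaneously diagonalized, so they lie in a common abelian algebra, and each $\A_i$ (a nonnegative combination of them) lies there too; the same holds for every $\B_j$ with $\B\in\testa$, and all such operators then commute.

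First I would extract the structure forced by the norm condition. Discarding the zero effects, let $\G_1,\dots,\G_m$ be the norm-$1$ effects. For each $\mu$, the inequality $0\le\G_\mu\le\id$ together with $\no{\G_\mu}=1$ yields a unit vector $\ket{\mu}$ with $\G_\mu\ket{\mu}=\ket{\mu}$. Evaluating the resolution of identity $\sum_\nu\G_\nu=\id$ in $\ket{\mu}$ and using positivity forces $\bra{\mu}\G_\nu\ket{\mu}=0$, hence $\G_\nu\ket{\mu}=0$, for every $\nu\neq\mu$; combining $\bra{\mu}\G_\mu=\bra{\mu}$ with $\G_\mu\ket{\nu}=0$ then gives $\langle\mu|\nu\rangle=0$. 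Thus $\{\ket{\mu}\}_{\mu=1}^m$ is orthonormal and $\G_\nu\ket{\mu}=\delta_{\mu\nu}\ket{\mu}$. Writing $V=\lin{\ket{1},\dots,\ket{m}}$, both $V$ and $V^\perp$ are invariant under each $\G_\mu$, and $\G_\mu=\ketbra{\mu}+N_\mu$, where $N_\mu:=\G_\mu|_{V^\perp}\ge 0$ is supported on $V^\perp$ and $\sum_{\mu=1}^m N_\mu=\id_{V^\perp}$, with $\dim(V^\perp)=d-m$ and $d:=\dim(\Ha)$. Since the one-dimensional pieces $\ketbra{\mu}$ act on mutually orthogonal subspaces of $V$ and commute with everything on $V^\perp$, one checks $[\G_\mu,\G_\nu]=[N_\mu,N_\nu]$, so $\G$ is commutative if and only if the $N_\mu$ commute.

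The hard point, and the only place where $d\le 4$ is used, is to argue that the $N_\mu$ must commute. Here the plan is a short case split on $m$. If $m\le 2$ the $N_\mu$ commute for a trivial reason: for $m=1$ there is a single $N_1=\id_{V^\perp}$, and for $m=2$ one has $N_2=\id_{V^\perp}-N_1$, which commutes with $N_1$. If instead $m\ge 3$, then $\dim(V^\perp)=d-m\le 4-3=1$, so $V^\perp$ is at most one-dimensional and every operator on it is a scalar, whence the $N_\mu$ again commute. Either way $\G$ is commutative, which completes the forward direction. I expect the extraction of the orthonormal eigenvectors and the decomposition $\G_\mu=\ketbra{\mu}+N_\mu$ to be the technical heart; the concluding count then makes transparent why the phenomenon is genuinely five-dimensional, as Example~\ref{ex:saa-nonCommutative} realizes exactly the first admissible non-commuting configuration $m=3$, $\dim(V^\perp)=2$.
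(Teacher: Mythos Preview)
Your proof is correct and follows essentially the same route as the paper: both invoke Theorem~\ref{thm:saa} to obtain the norm-one measurement $\G$, extract the orthonormal family $\{\ket{\mu}\}$ forcing $\G_\mu=\ketbra{\mu}+N_\mu$ with $N_\mu$ supported on the small complement $V^\perp$, and then use $d\le4$ to make the $N_\mu$ commute via a short case split on the number of nonzero outcomes. Your write-up is in fact more explicit than the paper's—spelling out why $\{\ket{\mu}\}$ is orthonormal, why $[\G_\mu,\G_\nu]=[N_\mu,N_\nu]$, and why commuting $\G$ forces all $\A_i,\B_j$ to commute—while the paper phrases the conclusion as ``$\G$ can be post-processed from a projective measurement,'' which is equivalent to your ``$\G$ is commutative.''
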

\begin{proof}
Assume that $(\tests, \testa, \testa)$ is broadcastable, the result follows by showing that the measurement $\G_\lambda$ as described in Thm.~\ref{thm:saa} can be post-processed from a projective measurement. It is then straightforward to show that $\A_i$ and $\B_j$ commute for every $\A, \B \in \testa$ and for all $i,j$.

If $\G$ is dichotomic, i.e., it has only two outcomes, then it is commutative, so we need to consider only the case when $\G$ is at least trichotomic, i.e., it has at least three nonzero outcomes. Since then for these three outcomes we must have $\no{\G_\lambda} = 1$, it follows that we must have at least $\dim(\Ha) = 3$, in which case $\G$ is projective measurement. Thus the only case we need to investigate is when $\dim(\Ha) = 4$ and $\G$ has three nonzero outcomes, since with four nonzero outcomes $\G$ is again projective measurement. So let $\dim(\Ha) = 4$ and let $\{\ket{i}\}_{i=1}^4$ be a suitable orthonormal basis. Then $\G_i = \ketbra{i} + \M_i$ for $i \in \{1,2,3\}$, which implies that $\M_i = \alpha_i \ketbra{4}$ for all $i \in \{1,2,3\}$ and thus again $\G$ can be post-processed from a single projective measurement.
\end{proof}

In the general case when $\testa \neq \testb$ commutativity is also a sufficient but not necessary criterion for broadcasting. This is demonstrated by the following example, where we use $B(\Ha)$ to denote the set of operators on the Hilbert space $\Ha$.
\begin{example}[\emph{Broadcastable but noncommuting measurements in the $(\state, \testa, \testb)$ scenario}] \label{ex:sab-nonCommutative}
Let $\dim(\Ha_6) = 6$, let $\{\ket{i}\}_{i=1}^6$ be an orthonormal basis and let $P = \sum_{i=1}^4 \ketbra{i}$. Then $P\Ha$ is a four-dimensional Hilbert space, thus it can be seen as tensor product of two qubit Hilbert spaces $\Ha_2$, $\dim(\Ha_2) = 2$, i.e., $P\Ha = \Ha_2 \otimes \Ha_2$. Moreover let $\iota: B(\Ha_2) \to B(\Ha_6)$ be the embedding defined by $\iota(\ket{i} \! \bra{j}) = \ket{i+5} \! \bra{j+5}$ for $i,j \in \{0,1\}$, that is $\iota$ maps operators on $\Ha_2$ to operators on $P^\perp B(\Ha_6)$.

Let $\A, \B$ be projective measurements on $\Ha_2$ given as
\begin{align}
 & \A_0 = \ketbra{0}, & & \A_1 = \ketbra{1}, \\
 & \B_0 = \ketbra{+}, & & \B_1 = \ketbra{-},
\end{align}
where $\{\ket{i}\}_{i=0}^1$ is an orthonormal basis of $\Ha_2$ and $\ket{\pm} = \frac{1}{\sqrt{2}}(\ket{0} \pm \ket{1})$.

Let $\Delta_\mu: B(\Ha_2) \to B(\Ha_2)$ be the partially dephasing channel given as $\Delta_\mu(\rho) = \mu(\rho) + (1-\mu) \frac{\id}{2}$. It is known that for $\mu \leq \frac{1}{3}$ the channel $\Delta_\mu$ is self-compatible, i.e., there is a channel $\Psi: B(\Ha_2) \to B(\Ha_2 \otimes \Ha_2)$ such that $\Delta_\mu(\rho) = \ptr{1}{\Psi(\rho)} = \ptr{2}{\Psi(\rho)}$. We now define measurements $\tilde{\A}$, $\tilde{\B}$ on $\Ha_6$ such that they do not commute, but are broadcastable. Let
\begin{align}
\tilde{\A}_i & = \A_i \otimes \id_2 + \iota(\Delta_{\frac{1}{3}}(\A_i)), \\
\tilde{\B}_i & = \id_2 \otimes \B_i + \iota(\Delta_{\frac{1}{3}}(\B_i)),
\end{align}
where $\id_2$ is the identity on $\Ha_2$ and the terms $\A_i \otimes \id_2$ and $\id_2 \otimes \B_i$ are supported on $P \Ha_6 = \Ha_2 \otimes \Ha_2$. In order to construct the channel $\Lambda$ that passes the broadcasting test $(\tests, \testa, \testb)$ we will need the auxiliary channel $\Pi: B(P\Ha_6) \to B(P\Ha_6 \otimes P\Ha_6)$ defined for $X_1, X_2 \in B(\Ha_2)$ as $\Pi(X_1 \otimes X_2) = X_1 \otimes \frac{\id}{2} \otimes \frac{\id}{2} \otimes X_2$ and extended by linearity, here we are heavily relying on the identification $P\Ha_6 = \Ha_2 \otimes \Ha_2$. Then let
\begin{equation}
\Lambda(\rho) = (P \otimes P) (\Pi(P \rho P) + (\Pi \circ \Psi \circ \iota^{-1})(P^\perp \rho P^\perp)) (P \otimes P)
\end{equation}
where $\iota^{-1}: B(P^\perp \Ha_6) \to B(\Ha_2)$ is the inverse to $\iota$. $\Lambda$ clearly is completely positive, we only need to prove that it is trace-preserving, but this follows from $\tr{(\Pi(P \rho P)) (P \otimes P)} = \tr{\rho P}$ and $\tr{(\Pi \circ \Psi \circ \iota^{-1})(P^\perp \rho P^\perp) (P \otimes P)} = \tr{\rho P^\perp}$. Finally, to show that $\Lambda$ passes the broadcasting test $(\tests, \testa, \testb)$, we have
\begin{equation}
\begin{split}
\tr{\Lambda(\rho)(\tilde{\A}_i \otimes \id_6)} & = \tr{ (P \otimes P) (\Pi(P \rho P) + (\Pi \circ \Psi \circ \iota^{-1})(P^\perp \rho P^\perp)) (P \otimes P) (\tilde{\A}_i \otimes \id_6)} \\
& = \tr{ (\Pi(P \rho P) + (\Pi \circ \Psi \circ \iota^{-1})(P^\perp \rho P^\perp)) ((\A_i \otimes \id_2) \otimes P)} \\
& = \tr{ (P \rho P) (\A_i \otimes \id_2)}
+ \tr{ (\Psi \circ \iota^{-1})(P^\perp \rho P^\perp) (\A_i \otimes \id_2)} \\
& = \tr{ \rho (\A_i \otimes \id_2)}
+ \tr{ \rho (\iota \circ \Delta_{\frac{1}{3}})(\A_i) } \\
& = \tr{\rho \tilde{\A}_i } .
\end{split}
\end{equation}
the proof for $\tilde{\B}_j$ is similar.
\end{example}

%%%%%%%%%%%%%%%%%%%%%%%%%%%%%%
\subsection{The case when $\testb$ is info-complete}
%%%%%%%%%%%%%%%%%%%%%%%%%%%%%%

It is sufficient that $\testb$ is infomationally-complete. In this case we obtain the following result:
\begin{restatable}{theorem}{thmTao} \label{thm:tao}
Assume that $\testb$ is informationally-complete. If $(\tests, \testa, \testb)$ is broadcastable, then for every $\A \in \testa$ there is $\tilde{\A} \in \obs(\state)$ such that $\A(\rho) = \tilde{\A}(\rho)$ for all $\rho \in \tests$ and $\tilde{\A}_i$ commutes with every $\rho \in \tests$.
\end{restatable}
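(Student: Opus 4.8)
The plan is to first strip the statement down to a pure perfect-transmission condition, and then to identify the structural obstruction that a perfectly transmitted set of states imposes on any measurement that accompanies it.

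First I would apply Prop.~\ref{prop:reform-broadcast} together with the informational completeness of $\testb$. Since an informationally complete family distinguishes all states, its factorization map $F_\testb$ is injective, so \eqref{eq:reform-broadcast-2} collapses to the perfect-transmission condition $\ptr{1}{\Lambda(\rho)} = \rho$ for all $\rho \in \tests$, while \eqref{eq:reform-broadcast-1} is just the defining condition $\A(\ptr{2}{\Lambda(\rho)}) = \A(\rho)$ for all $\A \in \testa$, $\rho \in \tests$. Writing $\Phi_1 = \ptr{2}{\Lambda(\cdot)}$ and $\Phi_2 = \ptr{1}{\Lambda(\cdot)}$ for the two marginal channels of $\Lambda$, the hypotheses become: $\Phi_2$ fixes every state in $\tests$, the maps $\Phi_1$ and $\Phi_2$ are compatible (being marginals of the single channel $\Lambda$), and $\A$ acts the same on $\Phi_1(\rho)$ as on $\rho$. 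By linearity all of this persists on $\lin{\tests}$, and I would restrict attention to the support projection $P$ of $\tests$ from the outset. I note that the naive candidate $\Lambda^*(\A_i \otimes \id)$ is \emph{not} the right choice for $\tilde{\A}$ (already for a single state $\tests = \{\rho\}$ and $\Lambda(\xi) = \xi \otimes \rho$ it returns $\A_i$, which need not commute with $\rho$), so the commuting measurement must be built by genuinely exploiting the structure of $\tests$.

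The heart of the argument is to analyze the states fixed by $\Phi_2$. I would invoke the Koashi--Imoto decomposition of the family $\tests$: on $P\Ha$ one has $P\Ha \cong \bigoplus_k (\Ha_k^C \otimes \Ha_k^Q)$ with every $\rho \in \tests$ of the form $\bigoplus_k p_k(\rho)\, \omega_k \otimes \rho_k$, where the $\omega_k$ are fixed (independent of $\rho$) and the ``quantum'' factors $\rho_k$ carry all the $\rho$-dependence. A channel fixes all of $\tests$ precisely when it preserves this block structure and acts as the identity on each $\Ha_k^Q$. The crucial step is then to upgrade the compatibility of $\Phi_1$ with such a $\Phi_2$ into the assertion that the accompanying measurement cannot resolve the $\Ha_k^Q$ factors: because $\Phi_2$ transmits the quantum factors intact to the second output, a no-broadcasting argument applied blockwise forces the effective effects $\tilde{\A}_i^{(0)} := \Phi_1^*(\A_i)$ to act as the identity on each $\Ha_k^Q$, i.e. $\tilde{\A}_i^{(0)} = \bigoplus_k B_{i,k} \otimes \id_{\Ha_k^Q}$. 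This already reproduces $\A$ on $\tests$, since $\A(\rho) = \A(\Phi_1(\rho))$ gives $\tr{\rho \A_i} = \tr{\rho\, \tilde{\A}_i^{(0)}}$.

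Finally, to turn this into a genuinely commuting measurement, I would pinch each block operator $B_{i,k}$ onto the eigenbasis of the fixed state $\omega_k$. Pinching leaves $\tr{\omega_k B_{i,k}}$ unchanged, hence does not alter the action on $\tests$, and it is a unital, positive, normalization-preserving operation, so the result $\tilde{\A}_i = \bigoplus_k \hat{B}_{i,k} \otimes \id_{\Ha_k^Q}$ is still a POVM; by construction each $\hat{B}_{i,k}$ commutes with $\omega_k$, so $\tilde{\A}_i$ commutes with $\bigoplus_k p_k(\rho)\, \omega_k \otimes \rho_k = \rho$ for every $\rho \in \tests$. Extending $\tilde{\A}$ block-diagonally (and otherwise arbitrarily) on $P^\perp\Ha$ completes it to a measurement in $\obs(\state)$ that agrees with $\A$ on $\tests$ and commutes with every $\rho \in \tests$, as required. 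The main obstacle is the blockwise no-broadcasting step: making precise that the compatibility of $\Phi_1$ with a $\Phi_2$ that copies the quantum factors $\Ha_k^Q$ perfectly forbids $\Phi_1$ from carrying any information about those factors. I expect to establish this by passing to a Stinespring dilation of $\Lambda$ and running the no-cloning argument inside each $\Ha_k^Q$, where perfect transmission on one output rules out any nontrivial action of $\A$ on the other.
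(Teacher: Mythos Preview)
Your approach is sound and genuinely different from the paper's. Where you invoke the Koashi--Imoto structure theorem for $\tests$ and then run a blockwise no-information-without-disturbance argument, the paper stays entirely operator-algebraic: it replaces $\Phi_2$ by the idempotent channel $\Pi_2$ obtained as a Ces\`aro limit (Lemma~\ref{lemma:saa-fixedPoint}), shows via the Schwarz inequality that $\Pi_2^*$ is a conditional expectation onto a von Neumann subalgebra, and proves directly (Prop.~\ref{prop:condexp_compatible}) that any channel compatible with such a $\Pi_2$ has adjoint range inside the commutant of that subalgebra. The commuting POVM then comes in closed form as $\tilde\A_i=\Pi_2^*(\Phi_1^*(\A_i))$, and commutativity with every $\rho\in\tests$ follows in two lines from the module property of the conditional expectation --- no pinching step is needed. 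Your Koashi--Imoto route is more information-theoretic and makes the protected sectors $\Ha_k^Q$ explicit, at the price of importing a nontrivial external structure theorem; the paper's route is self-contained but more abstract. One small point of care in your outline: the assertion that $\Phi_1^*(\A_i)$ itself equals $\bigoplus_k B_{i,k}\otimes\id_{\Ha_k^Q}$ is slightly too strong, since off-block-diagonal pieces of $\Phi_1^*(\A_i)$ are not constrained by the blockwise argument; what the no-disturbance step actually gives (using that $\Phi_2(\xi\otimes\sigma)=\Psi_k(\xi)\otimes\sigma$ still reveals $\sigma$ for \emph{every} $\xi$ on $\Ha_k^C$, not just $\xi=\omega_k$) is that the block-diagonal part of $\Phi_1^*(\A_i)$ has this form. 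That is all you need, since every $\rho\in\tests$ is block-diagonal, and your pinching step then goes through unchanged.
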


\begin{proof}
The proof of is in the Appendix~\ref{appendix:proofTao}.
\end{proof}

Finally we present a simple example that demonstrates why we sometimes need to use $\tilde{\A}$ instead of $\A$.
\begin{example}
Consider the qubit system and assume that $\tests$ is the set of density matrices diagonal in the computational basis and $\testa = \{\A\}$ where $\A$ is the projective measurement corresponding to the eigenvectors of $\sigma_x$. Clearly there are operators in $\tests$ and $\testa$ that do not commute. But note that $\tr{\rho \A_i} = \frac{1}{2}$ for all $\rho \in \tests$ and so for $\tilde{\A}_i = \frac{\1}{2}$ we have $\A(\rho) = \tilde{\A}(\rho)$ and $[\rho, \tilde{\A}_i] = 0$ for all $\rho \in \tests$.
\end{example}

%%%%%%%%%%%%%
\section{Conclusions}
%%%%%%%%%%%%%
The no-broadcasting theorem is among the most important limitations in quantum information processing. It has, however, very strong assumptions as it uses all states and all measurements in its usual formulation. In the current work we generalized the setting of accurate but limited broadcasting and introduce the concept of a broadcasting test. This enables to pose the question of the exact conditions when broadcasting is possible. Earlier studies have already revealed that commutativity, either between states or measurements, makes broadcasting possible and that is not surprising as the scenario is then effectively classical, but, as we have shown, commutativity is not necessary for broadcasting in general.

We have characterized the conditions for broadcastability in certain important special classes of broadcasting tests. While we did not cover the most general scenario, even the special cases that we have investigated turned out to provide nontrivial and unexpected results. We have also investigated generalizations of broadcasting to channels and we have shown that such generalizations can be mapped back to the case considering only states and measurements.

Future investigations into broadcasting can be done along several lines: one can, of course, attempt to solve the most general scenario, but this will likely be a hard task. Another option would be, in the spirit of \cite{BaBaLeWi07}, to generalize our results to more general systems, such as the ones described by operational probabilistic theories \cite{ChDaPe09pra} or general probabilistic theories \cite{muller2021probabilistic,Plavala21}. Yet another line of research would be to consider approximate broadcasting, that is allowed to return the correct result only with some probability, and generalize our result in this direction.

\begin{acknowledgments}
This research has been supported by Academy of Finland mobility cooperation funding (grant no 434228) and DAAD Joint Research Cooperation Scheme (project no 57570110).
A.J. acknowledges support from the grant VEGA 1/0142/20 and the Slovak Research and Development Agency grant APVV-20-0069.
M.P. acknowledges support from the Deutsche Forschungsgemeinschaft (DFG, German Research Foundation, project numbers 447948357 and 440958198), the Sino-German Center for Research Promotion (Project M-0294), the ERC (Consolidator Grant 683107/TempoQ), the German Ministry of Education and Research (Project QuKuK, BMBF Grant No. 16KIS1618K), and the Alexander von Humboldt Foundation.
\end{acknowledgments}

\bibliographystyle{unsrt}
\bibliography{bibliography}

%%%%%%%%%%%%%%%%%%%%
%
% BIBLIOGRAPHY
%
%%%%%%%%%%%%%%%%%%%%

\numberwithin{lemma}{section}
\numberwithin{proposition}{section}
\numberwithin{theorem}{section}
\numberwithin{definition}{section}
\appendix

\section{Proof of Theorem~\ref*{thm:saa}} \label{appendix:proofSaa}
We denote by $B(\Ha)$ the set of operators on the Hilbert space $\Ha$ and by $B(\Ha)^+$ the set of positive semidefinite operators. We will assume that all of the Hilbert spaces are finite-dimensional.
\begin{definition}
Let $\Phi: \state \to \state$ be a channel. The adjoint channel (also called the Heisenberg picture) $\Phi^*: B(\Ha) \to B(\Ha)$ is given for $\rho \in \state$ and $E \in B(\Ha)$ as
\begin{equation}
\tr{\Phi^*(E) \rho} = \tr{E \Phi(\rho)} \, .
\end{equation}
\end{definition}
It is straightforward to show that $\Phi^*$ is a linear and completely positive map and that $\Phi^*$ is unital, i.e., $\Phi^*(\id) = \id$. The following lemma will play a key role in our calculations. The first one is heavily inspired by \cite[Lemma 1]{BaBaLeWi07}.
\begin{lemma} \label{lemma:saa-fixedPoint}
Let $\Phi: B(\Ha) \to B(\Ha)$ be a positive unital map and let
\begin{equation}
\fix(\Phi)^+ = \{ X \in B(\Ha)^+ : \Phi(X) = X \}
\end{equation}
be the set of positive fixed points of $\Phi$. There is a positive map
\begin{equation}
\Pi: B(\Ha) \to B(\Ha)
\end{equation}
such that for all $X \in B(\Ha)^+$, we have $\Pi(X) \in \fix(\Phi)^+$ and $\Pi$ is idempotent, i.e., $\Pi^2 = \Pi$. Moreover $\Pi$ is unital and:
\begin{enumerate}[(a)]
\item if $\Phi$ is completely positive, then $\Pi$ is completely positive,
\item if $\Phi$ is trace preserving, then $\Pi$ is trace preserving.
\end{enumerate}
\end{lemma}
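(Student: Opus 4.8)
The plan is to construct $\Pi$ as the Cesàro (time) average of the iterates of $\Phi$, following the ergodic-averaging idea behind \cite[Lemma 1]{BaBaLeWi07}. First I would record that a positive unital map is automatically contractive: for self-adjoint $X$ with $\no{X}\le 1$ one has $-\id \le X \le \id$, hence $-\id \le \Phi(X) \le \id$ and $\no{\Phi(X)}\le 1$. Consequently $\no{\Phi^n}\le 1$ for every $n$ on the space of self-adjoint operators, and via the decomposition $X = X_1 + iX_2$ this yields power-boundedness on all of $B(\Ha)$. This uniform bound is the ingredient that makes the averaging converge.

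Next I would define the Cesàro means
\begin{equation}
\Pi_N = \frac{1}{N}\sum_{n=0}^{N-1}\Phi^n
\end{equation}
and argue that $\Pi := \lim_{N\to\infty}\Pi_N$ exists. Here I would invoke the finite-dimensional mean ergodic theorem: power-boundedness forces every eigenvalue of $\Phi$ of modulus $1$ to be semisimple, so $B(\Ha)$ splits into the eigenspace $\fix(\Phi)=\ker(\Phi-\id)$ and a complement on which the averaged contributions of the remaining eigenvalues $\lambda$ vanish, since $\frac{1}{N}\sum_{n=0}^{N-1}\lambda^n \to 0$ whenever $\lambda \neq 1$ (both for $\mo{\lambda}<1$ and for $\mo{\lambda}=1$, $\lambda\neq1$). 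Thus $\Pi$ is the projection onto $\fix(\Phi)$. The identities $\Phi\Pi = \Pi$ and the idempotency $\Pi^2=\Pi$ then follow from the telescoping relation $\Phi\Pi_N - \Pi_N = \frac{1}{N}(\Phi^N - \id)\to 0$, which gives $\Phi\Pi = \Pi$, and from $\Pi_N\Pi = \Pi$, which in the limit gives $\Pi^2 = \Pi$.

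Finally I would collect the algebraic properties by passing them through the limit. Each $\Phi^n$ is positive, so each $\Pi_N$ is positive and hence so is $\Pi$; combined with $\Phi\Pi=\Pi$ this yields $\Pi(X)\in\fix(\Phi)^+$ for every $X\in B(\Ha)^+$. Unitality is immediate since $\Phi(\id)=\id$ forces $\Pi_N(\id)=\id$ for all $N$, hence $\Pi(\id)=\id$. If $\Phi$ is completely positive, then so is each $\Phi^n$ and each $\Pi_N$, whence $\Pi$ is completely positive, giving (a); if $\Phi$ is trace preserving, the same argument gives (b). The main obstacle is the convergence step: showing that the Cesàro means actually converge and identifying the limit with the fixed-point projection. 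Once power-boundedness is in hand this is the finite-dimensional mean ergodic theorem, and it is the only genuinely non-formal ingredient; all remaining properties are routine limits of the corresponding properties of the $\Pi_N$.
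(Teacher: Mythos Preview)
Your proof is correct and follows the same Ces\`aro-averaging strategy as the paper; the only substantive difference is in the convergence step. The paper does not argue that the full sequence $\Pi_n=\frac{1}{n}\sum_{k=1}^n\Phi^k$ converges: it merely uses that $\no{\Phi}\le 1$ on self-adjoint operators to place all $\Pi_n$ in the unit ball of superoperators, extracts a convergent subsequence by compactness, and checks directly that any subsequential limit is positive, unital, idempotent, and maps into $\fix(\Phi)^+$. You instead invoke the finite-dimensional mean ergodic theorem, using power-boundedness to force eigenvalues on the unit circle to be semisimple and thereby obtaining convergence of the entire sequence to the projection onto $\fix(\Phi)$. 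Your route is slightly stronger (actual convergence rather than a subsequential limit) and identifies $\Pi$ more explicitly, at the cost of appealing to a spectral argument; the paper's route is more elementary but only yields what is strictly needed. One small remark: your extension of the contractivity bound from self-adjoint to all of $B(\Ha)$ via $X=X_1+iX_2$ gives $\no{\Phi^n}\le 2$ rather than $\le 1$, but this still suffices for power-boundedness and hence for the ergodic argument.
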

\begin{proof}
Let
\begin{equation}
\Pi_n = \dfrac{1}{n} \sum_{k=1}^n \Phi^k,
\end{equation}
then the main idea of the proof is that $\Pi \approx \lim_{n \to \infty} \Pi_n$. In order
to make the limit meaningful we will show that the sequence $\{\Pi_n\}_{n=1}^\infty$ contains a convergent subsequence.

First of all observe that $\id \in \fix(\Phi)$ since $\Phi$ is unital. Define
\begin{equation}
B_1 = \{ X \in B(\Ha): -\id \leq X \leq \id \}
\end{equation}
and so for every $X \in B_1$ we have $0 \leq \frac{1}{2}(\id + X) \leq 1$, or, equivalently, for any $X \in B_1$ there is $E \in B(\Ha)$, $0 \leq E \leq \id$ such that $X = 2E - \id$. Note that $0 \leq \Phi(E)$ follows from positivity of $\Phi$ and $\Phi(E) \leq \id$ follows from $0 \leq \id - \Phi(E) = \Phi(\id - E)$ and again from positivity of $\Phi$. We thus get
\begin{equation} \label{eq:saa-fixedPoint-PhiXinB1}
\Phi(X) = 2 \Phi(E) - \id \in B_1.
\end{equation}
Since for every $X \in B_1$ we have $\no{X} \leq 1$ and for every hermitian $Y \in B(\Ha)$ such that $\no{Y} \leq 1$ we have $Y \in B_1$, we can express the superoperator norm of $\Phi$ as
\begin{equation}
\no{\Phi} = \inf \{ \mu \in \real : \no{\Phi(X)} \leq \mu, \forall X \in B_1 \}
\end{equation}
and using \eqref{eq:saa-fixedPoint-PhiXinB1} we get $\no{\Phi} \leq 1$. We then have
\begin{equation}
\no{\Pi_n} \leq \dfrac{1}{n} \sum_{k=1}^n \no{\Phi}^k \leq 1
\end{equation}
which shows that the sequence $\{ \Pi_n \}_{n=1}^\infty$ is a subset of the unit ball of superoperators. It follows that there is a convergent subsequence $\{ \Pi_{n_j} \}_{j=1}^\infty$, see e.g. \cite[Theorem 28.2]{Munkres-topology}. Denote
\begin{equation}
\Pi = \lim_{j \to \infty} \Pi_{n_j}.
\end{equation}
Let us first of all show that $\Pi$ is positive. Let $X \in \lhp$, then
\begin{equation}
\Pi(X) = \lim_{j \to \infty} \dfrac{1}{n_j} \sum_{k=1}^{n_j} \Phi^k(X).
\end{equation}
Since $\sum_{k=1}^{n_j} \Phi^k(X) \geq 0$, $\Pi(X)$ is the limit of positive operators. But then $\Pi(X) \geq 0$ since the positive cone $B(\Ha)^+$ is closed.

In order to show that $\Phi \circ \Pi = \Pi$, which is equivalent to $\Pi(X) \in \fix(\Phi)$ for $X \in B(\Ha)^+$, we will need to show that $\{ \Pi_{n_j + 1} \}_{j=1}^\infty$ converges to the same limit as $\{ \Pi_{n_j} \}_{j=1}^\infty$. First note that
\begin{equation} \label{eq:appFixedPoint-n+1}
\no{\Pi_{n+1} - \Pi_n} = \dfrac{1}{n+1} \no{ \Phi^{n+1} - \dfrac{1}{n} \sum_{k=1}^n \Phi^k } \leq \dfrac{2}{n+1}.
\end{equation}
Now we have
\begin{equation}
\no{ \Pi - \Pi_{n_j + 1} } \leq \no{ \Pi - \Pi_{n_j} } + \no{ \Pi_{n_j} - \Pi_{n_j + 1} }
\end{equation}
where we can use \eqref{eq:appFixedPoint-n+1} and that $\{ \Pi_{n_j} \}_{j=1}^\infty$ converges to $\Pi$ to show that $\lim_{j \to \infty} \Pi_{n_j + 1} = \Pi$. We now have
\begin{equation}
\begin{split}
\Phi(\Pi(X)) & = \lim_{j \to \infty} \dfrac{1}{n_j} \sum_{k=1}^{n_j} \Phi^{k+1}(X) \\
& = \lim_{j \to \infty} \left( \dfrac{1}{n_j} \sum_{k=1}^{n_j+1} \Phi^{k}(X) - \dfrac{1}{n_j} \Phi(X) \right) \\
& = \lim_{j \to \infty} \left( \dfrac{n_j + 1}{n_j} \right) \lim_{j \to \infty} \left( \dfrac{1}{n_j + 1} \sum_{k=1}^{n_j+1} \Phi^{k}(X) \right) - \lim_{j \to \infty} \left( \dfrac{1}{n_j} \Phi(X) \right) \\
& = \lim_{j \to \infty} \Pi_{n_j + 1}(X) = \Pi(X)
\end{split}
\end{equation}
and so $\Pi(X) \in \fix(\Phi)$ for all $X \in B(\Ha)^+$.

To show that $\Pi$ is idempotent we only need to show $\Pi(Y) = Y$ for all $Y \in \fix(\Phi)$, then $\Pi^2 = \Pi$ follows from $\Pi(X) \in \fix(\Phi)$. We have
\begin{equation}
\Pi(Y) = \lim_{j \to \infty} \dfrac{1}{n_j} \sum_{k=1}^{n_j} \Phi^k(Y) = \lim_{j \to \infty} \dfrac{1}{n_j} \sum_{k=1}^{n_j} Y = Y.
\end{equation}
Since $\Phi$ is unital, we have that $\id \in \fix(\Phi)$ and so $\Pi$ is unital as well.

Finally if $\Phi$ is completely positive, then also $\Phi^k$ is completely positive, which implies that $\Pi_n$ is completely positive and so $\Pi$ is a limit of completely positive maps. It then follows that also $\Pi$ is completely positive.

If $\Phi$ is trace preserving, then
\begin{equation}
\tr{\Pi(X)} = \lim_{j \to \infty} \dfrac{1}{n_j} \sum_{k=1}^{n_j} \tr{\Phi^k(X)} = \lim_{j \to \infty} \dfrac{1}{n_j} \sum_{k=1}^{n_j} \tr{X} = \tr{X}
\end{equation}
and so $\Pi$ is also trace preserving.
\end{proof}

\thmSaa*
\begin{proof}
Let $\Lambda$ be the channel that passes the broadcasting test, then we have $\Lambda^*(\A_i \otimes \id) = \Lambda^*(\id \otimes \A_i) = \A_i$. We can now, without the loss of generality, assume that $\swap \circ \Lambda = \Lambda$, for example by replacing $\Lambda$ with $\frac{1}{2}(\Lambda + \swap \circ \Lambda)$, which also passes the same broadcasting test whenever $\Lambda$ does. Here $\swap$ is the defined as $\swap(\rho_1 \otimes \rho_2) = \rho_2 \otimes \rho_1$. We then have that $\ptr{1}{\Lambda(\rho)} = \ptr{2}{\Lambda(\rho)}$, thus we define a channel $\Phi: \state \to \state$ as $\Phi(\rho) = \ptr{1}{\Lambda(\rho)}$.

Since $\Phi$ is trace preserving, we have that $\Phi^*$ is unital and thus, according to Lemma~\ref{lemma:saa-fixedPoint}, there is a projection $\Pi: B(\Ha) \to \fix(\Phi^*)$, where $\fix(\Phi^*) = \{ X \in B(\Ha): \Phi^*(X) = X\}$, note that we have $\Phi^*(\A_i) = \A_i$ as a consequence of \eqref{eq:broad-A} and \eqref{eq:broad-B}, so $\testa \subset \fix(\Phi^*)^+$. We will now treat $\fix(\Phi^*)$ as a vector space and $\fix(\Phi^*)^+$ as a proper cone in $\fix(\Phi^*)$. We then have the dual vector space $\fix(\Phi^*)^*$ which contains the dual cone $\fix(\Phi^*)^{*+}$. We can now treat $\Pi: B(\Ha) \to \fix(\Phi^*)$ as a positive linear map and define $\iota: \fix(\Phi^*) \to B(\Ha)$ as the trivial embedding of $\fix(\Phi^*)$ into $B(\Ha)$, so that we have that $\Pi \circ\iota = \mathrm{id}_{\fix(\Phi^*)}$ is the identity map on $\fix(\Phi^*)$. The adjoint positive linear maps are $\Pi^*: \fix(\Phi^*)^* \to B(\Ha)$ and $\iota^*: B(\Ha) \to \fix(\Phi^*)^*$ and we again have $\iota^* \circ \Pi^* = \mathrm{id}_{\fix(\Phi^*)^*}$. Here $\fix(\Phi^*)^{*+}$ can be understood as the cone generated by states of some suitable operational theory \cite{muller2021probabilistic,Plavala21} and $\fix(\Phi^*)^+$ as the dual cone generated by the effect algebra.

Let us now construct the map $\Lambda_{\fix(\Phi^*)^*}: \fix(\Phi^*) \to \fix(\Phi^*)^{\otimes 2}$ as
\begin{equation}
\Lambda_{\fix(\Phi^*)^*} = (\iota^* \otimes \iota^*) \circ \Lambda \circ \Pi^*
\end{equation}
and we will show that $\Lambda_{\fix(\Phi^*)^*}$ is a perfect broadcasting map on $\fix(\Phi^*)^{*+}$, which then the no-broadcasting theorem \cite{BaBaLeWi07} implies that $\fix(\Phi^*)^{*+}$ is a classical cone, i.e., that any base of $\fix(\Phi^*)^{*+}$ must be a simplex. So let $x \in \fix(\Phi^*)^+$ and $y \in \fix(\Phi^*)^{*+}$, then we have
\begin{equation}
\langle \Lambda_{\fix(\Phi^*)^*}(y), x \otimes \id \rangle = \tr{\Pi^*(y) \Lambda^*(\iota(x) \otimes \id)} = \tr{\Pi^*(y) \iota(x)} = \langle y, x \rangle
\end{equation}
where $\langle \cdot, \cdot \rangle$ denotes the pairing between $\fix(\Phi^*)$ and $\fix(\Phi^*)^*$; $\langle \id \otimes x, \Lambda_{\fix(\Phi^*)^*}(y) \rangle = \langle x, y \rangle$ follows in the same way. Since $\fix(\Phi^*)^{*+}$ is generated by a simplex, it follows that there is basis of $g_\lambda \in \fix(\Phi^*)^+$ such that every $x \in \fix(\Phi^*)^+$ is a positive linear combination of $g_\lambda$, i.e., $x = \sum_\lambda \alpha_\lambda g_\lambda$, $\alpha_\lambda \geq 0$, $\id = \sum_\lambda g_\lambda$, and there are $s_\lambda \in \fix(\Phi^*)^{*+}$ such that $\tr{\Pi^*(s_\lambda)} = 1$ and $\langle s_\mu, g_\lambda \rangle = \delta_{\lambda \mu}$. Then let $\G_\lambda = \iota(g_\lambda)$, then we have that $\no{\G_\lambda} = 1$ since $\langle s_\lambda, g_\lambda \rangle = \tr{\Pi^*(s_\lambda) \G_\lambda} = 1$ where $\Pi^*(s_\lambda) \in \state$. Moreover for any $\A \in \testa$, since $\A_i \in \fix(\Phi^*)$, we have $\A_i = \sum_\lambda p(i|\A, \lambda) \G_\lambda$, $\sum_i p(i|\A, \lambda) = 1$ follows from the linear independence of $\G_\lambda$.
\end{proof}

\section{Proof of Theorem~\ref*{thm:tao}} \label{appendix:proofTao}
Let $\tests \subseteq \state$ be a set of states and $\testa, \testb \subseteq \obs$ be two sets of POVMs. The triple $(\tests,\testa,\testb)$ is broadcastable if there exists a channel $\Lambda: \state \to \state \otimes \state$ such that \eqref{eq:broad-A} and \eqref{eq:broad-B} hold for all $\A \in \testa$, $\B \in \testb$, $\rho \in \tests$. In other words, there exist two compatible channels $\Phi_1=\ptr{2}{\Lambda(\rho)}$, $\Phi_2=\ptr{1}{\Lambda(\rho)}$ such that
\begin{align}
\A(\Phi_1(\rho)) & = \A(\rho) \\
\B(\Phi_2(\rho)) & = \B(\rho)
\end{align}
for all $\A \in \testa$, $\B \in \testb$, $\rho \in \tests$. Assume that $\testb$ is information complete, so that the second equality becomes
\begin{equation}
\Phi_2(\rho)=\rho.
\end{equation}

Since $\tests$ is a convex set, there is some element $\sigma\in \tests$ such that its support contains the supports of all other states in $\tests$. Let $P = \mathrm{supp}(\sigma)$ be the projection onto the support of $\sigma$. Note that since $\sigma\in \fix(\Phi_2)$, we have $\Phi_2(PB(\Ha)P)\subseteq PB(\Ha)P$. Let us replace the channels $\Phi_1$ and $\Phi_2$ with their restrictions to $PB(\Ha)P\equiv B(P\Ha)$, then $\Phi_2$ is a channel $B(P\Ha)\to B(P\Ha)$ compatible with the channel $\Phi_1: B(P\Ha) \to B(\Ha)$ and $\tests \subseteq \fix(\Phi_2)$, so that $\fix(\Phi_2)$ contains a full rank state. According to Lemma \ref{lemma:saa-fixedPoint}, there is a unital completely positive idempotent map whose range is the set of fixed points $\fix(\Phi_2^*)$. Let $\Pi_2$ denote the adjoint of this map, then it is easily seen that $\Pi_2$ is an  idempotent channel such that  $\fix(\Pi_2)=\fix(\Phi_2)$. Since compatibility is preserved by post-processings, $\Pi_2=\Pi_2\circ\Phi_2$ is compatible with $\Phi_1$. We may therefore assume that $\Phi_2$ is an idempotent channel with a full rank fixed state.

Let us recall some well known facts, for the convenience of the reader. Let $\Pi:B(\Ha)\to B(\Ha)$ be an idempotent channel which has a full rank fixed state $\sigma$. First, note that the adjoint map $\Pi^*:B(\Ha)\to B(\Ha)$ is an idempotent unital completely-positive map whose range is a subalgebra of $B(\Ha)$. Indeed, the range of $\Pi^*$ is clearly a self-adjoint linear subspace containing $\id$, it is therefore enough to prove that $\Pi^*(X^*X)=X^*X$ for all $X =\Pi^*(X)$. By the Schwarz inequality for unital completely-positive maps \cite[Prop. 3.3]{paulsen2002completely}, we have
\begin{equation}
\Pi^*(X^*X) \geq \Pi^*(X)^*\Pi^*(X) = X^*X
\end{equation}
and since $\sigma \in \fix(\Pi)$, we have $\tr{\sigma(\Pi^*(X^*X) - X^*X)} = 0$, so that $\Pi^*(X^*X)=X^*X$. Moreover $\Pi^*$ is a conditional expectation, which means that it has the conditional expectation property
\begin{equation}
\Pi^*(\Pi^*(X)Z\Pi^*(Y))=\Pi^*(X)\Pi^*(Z)\Pi^*(Y)
\end{equation}
for $X,Y, Z\in B(\Ha)$. This can be seen from the fact that $\Pi^*$ is multiplicative on its range and \cite[Thm. 3.18]{paulsen2002completely}.

The next result extends the well known equivalence of commutativity and compatibility for projective measurements.

\begin{proposition}\label{prop:condexp_compatible} Let $\Pi:B(\Ha)\to B(\Ha)$ be an idempotent channel whose adjoint $\Pi^*$ is a conditional expectation and let $\Phi: B(\Ha)\to B(\Ka)$ be any channel. Then $\Phi$ is compatible with $\Pi$ if and only if $\Phi^*$ and $\Pi^*$ have commuting ranges.

\end{proposition}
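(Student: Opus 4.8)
The plan is to pass to the Heisenberg picture and translate compatibility into a statement about the unital completely positive adjoint maps. Writing $\Phi^*:B(\Ka)\to B(\Ha)$ and $\Pi^*:B(\Ha)\to B(\Ha)$ for the adjoint channels, the pair $\Phi,\Pi$ is compatible precisely when there exists a unital completely positive map $\Lambda^*:B(\Ka)\otimes B(\Ha)\to B(\Ha)$ satisfying the marginal relations $\Lambda^*(E\otimes\id)=\Phi^*(E)$ and $\Lambda^*(\id\otimes F)=\Pi^*(F)$ for all $E\in B(\Ka)$ and $F\in B(\Ha)$; such a $\Lambda^*$ is exactly the adjoint of a joint channel $\Lambda$ whose partial traces are $\Phi$ and $\Pi$. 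Throughout I write $\mathcal{N}=\ran\Pi^*$, which is a $\ast$-subalgebra of $B(\Ha)$ on which $\Pi^*$ acts as the identity, since $\Pi^*$ is the idempotent conditional expectation recalled above. With this notation the commuting-ranges hypothesis is exactly the inclusion $\ran\Phi^*\subseteq\mathcal{N}'$.

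For the implication from commuting ranges to compatibility I would construct $\Lambda^*$ directly. The key observation is that whenever two $\ast$-subalgebras $\mathcal{N}'$ and $\mathcal{N}$ of $B(\Ha)$ commute elementwise, the multiplication map $m:\mathcal{N}'\otimes\mathcal{N}\to B(\Ha)$ defined by $m(A\otimes B)=AB$ is a $\ast$-homomorphism, and hence completely positive: the elementwise commutation of $\mathcal{N}$ and $\mathcal{N}'$ is precisely what forces $m$ to respect products. I would then set $\Lambda^*=m\circ(\Phi^*\otimes\Pi^*)$, where $\Phi^*\otimes\Pi^*$ is regarded as a completely positive map into $\mathcal{N}'\otimes\mathcal{N}$. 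Being a composition of completely positive maps, $\Lambda^*$ is completely positive; it is unital because $\Phi^*$ and $\Pi^*$ are; and the marginals are correct since $\Lambda^*(E\otimes\id)=\Phi^*(E)\Pi^*(\id)=\Phi^*(E)$ and $\Lambda^*(\id\otimes F)=\Phi^*(\id)\Pi^*(F)=\Pi^*(F)$. Taking the pre-adjoint of $\Lambda^*$ yields the desired joint channel.

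The converse is where the conditional-expectation hypothesis does the real work, and I expect this to be the main obstacle. Given a joint $\Lambda^*$ as above, I would show that every element $\id\otimes F$ with $F\in\mathcal{N}$ lies in the multiplicative domain of the unital completely positive map $\Lambda^*$. This uses only that $\mathcal{N}$ is a $\ast$-algebra on which $\Pi^*$ is the identity: for such $F$ one computes $\Lambda^*\big((\id\otimes F)^*(\id\otimes F)\big)=\Pi^*(F^*F)=F^*F=\Lambda^*(\id\otimes F)^*\Lambda^*(\id\otimes F)$, and symmetrically for $FF^*$. By the multiplicative-domain theorem for unital completely positive maps (as in the cited Paulsen reference), $\Lambda^*$ is then bimultiplicative against $\id\otimes F$, so evaluating $\Lambda^*(E\otimes F)$ through the two factorizations $E\otimes F=(E\otimes\id)(\id\otimes F)=(\id\otimes F)(E\otimes\id)$ gives $\Phi^*(E)\Pi^*(F)=\Pi^*(F)\Phi^*(E)$. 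Letting $E$ and $F$ range freely shows that $\ran\Phi^*$ and $\ran\Pi^*=\mathcal{N}$ commute. The delicate point is that $\mathcal{N}$ must genuinely be a $\ast$-subalgebra with $\Pi^*$ restricting to the identity on it, which is exactly the content of the conditional-expectation structure; once this is in place the argument reduces to standard facts about multiplicative domains.
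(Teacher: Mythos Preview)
Your proof is correct. The converse (commuting ranges $\Rightarrow$ compatible) is essentially the paper's argument: the paper simply defines $\Lambda^*(X\otimes Y)=\Phi^*(X)\Pi^*(Y)$ and cites a standard reference for its complete positivity, whereas you unpack this by factoring through the $\ast$-homomorphism $m:\mathcal{N}'\otimes\mathcal{N}\to B(\Ha)$. For the forward direction (compatible $\Rightarrow$ commuting ranges) you take a genuinely different route. The paper argues that the subalgebra $\mathcal{N}=\ran\Pi^*$ is generated by its projections, and then observes that for any projection $Q\in\mathcal{N}$ and any effect $X$, the joint map $\Lambda^*$ produces a joint POVM for the binary measurements $\{\Phi^*(X),\id-\Phi^*(X)\}$ and $\{Q,\id-Q\}$; since the latter is projective, compatibility forces commutativity. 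Your multiplicative-domain argument bypasses this reduction entirely: once $\id\otimes F$ lies in the multiplicative domain of $\Lambda^*$ for every $F\in\mathcal{N}$, the commutation $[\Phi^*(E),\Pi^*(F)]=0$ drops out in one line from the two factorizations of $E\otimes F$. Your approach is arguably cleaner and makes the role of the conditional-expectation hypothesis (namely that $\mathcal{N}$ is a $\ast$-subalgebra on which $\Pi^*$ is the identity) completely transparent; the paper's approach has the modest advantage of staying at the level of effects and projections, connecting directly to the familiar statement that sharp measurements are compatible only with what they commute with.
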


\begin{proof} Let $\Lambda:B(\Ha)\to B(\Ka\otimes \Ha)$ be the joint channel for $\Phi$
and $\Pi$, so that the adjoint map satisfies
\begin{equation}
\Lambda^*(X\otimes I)=\Phi^*(X),\quad \Lambda^*(I\otimes Y)=\Pi^*(Y),\qquad X\in B(\Ka),\
Y\in B(\Ha).
\end{equation}
Since the range of $\Pi^*$ is a subalgebra in $B(\Ha)$, it is generated by projections it contains. It is therefore enough to show that any projection $Q=\Pi^*(Q)$ commutes with all elements of the form $\Phi^*(X)$, $X\in B(\Ka)$. We may also restrict to effects, $0\le X\le I_\Ka$, since these generate $B(\Ka)$. In this case $\Phi^*(X)$ is an effect as well and using the joint channel, it is easily seen that it is compatible with $Q$. Since $Q$ is a projection, this shows that $Q$ must commute with $\Phi^*(X)$.

For the converse, assume that $\Phi^*$ and $\Pi^*$ have commuting ranges, then the map
\begin{equation}
\Lambda^*: B(\Ka\otimes \Ha)\to B(\Ha),\quad X\otimes Y\mapsto \Phi^*(X)\Pi^*(Y),\qquad
X\in B(\Ka),\ Y\in B(\Ha)
\end{equation}
is unital and completely positive (e.g. \cite[Prop. VI.4.23]{TOA1}), and its adjoint is a joint channel for $\Phi$ and $\Pi$.
\end{proof}

\thmTao*
\begin{proof}
We see by Proposition \ref{prop:condexp_compatible} and the paragraph above it that we may assume that $\Phi_1$ and $\Phi_2$ are compatible channels such that the adjoint of $\Phi_2^*$ is a conditional expectation, so that the ranges of the adjoints $\Phi_1^*$ and $\Phi_2^*$ must commute.

Let $A\in B(\Ha)$ and $\rho\in \tests$. Then
\begin{equation}
\tr{A \Phi_1(\rho)} = \tr{\Phi_1^*(A) \rho} = \tr{\Phi_1^*(A) \Phi_2(\rho)} =\tr{\Phi_2^*(\Phi_1^*(A))\rho} \, .
\end{equation}
By the conditional expectation property of $\Phi^*_2$ we have for any $C=\Phi_2^*(C)$,
\begin{equation}
\Phi_2^*(\Phi_1^*(A))C = \Phi_2^*(\Phi_1^*(A)C) = \Phi_2^*(C\Phi_1^*(A)) =
C\Phi_2^*(\Phi_1^*(A)),
\end{equation}
so that $Z=\Phi_2^*(\Phi_1^*(A))$ commutes with all other elements in the range of
$\Phi_2^*$ . For any $X \in B(P\Ha)$ and $\rho \in \tests$:
\begin{equation}
\tr{XZ\rho} = \tr{XZ \Phi_2(\rho)} = \tr{\Phi_2^*(XZ) \rho} = \tr{\Phi_2^*(X)Z \rho} =
\tr{Z\Phi_2^*(X)\rho} = \tr{\Phi_2^*(ZX) \rho} = \tr{X \rho Z} \, ,
\end{equation}
so that $Z$ commutes with $\rho$. For $\A \in \testa$, it is now enough to put $\tilde{\A}
= \Phi_2^*(\Phi_1^*(\A))$, then $\tilde{\A}$ commutes with all $\rho \in \tests$ and we have
\begin{equation}
\tilde{\A}(\rho) = \Phi_2^*(\Phi_1^*(\A))(\rho) = \Phi_1^*(\A)(\rho) = \A(\rho).
\end{equation}
\end{proof}

\end{document}